\documentclass{article}

\usepackage{authblk}
\usepackage[inline]{enumitem}
\usepackage{fullpage}
\usepackage{amsfonts}
\usepackage{bm}
\usepackage{mathrsfs,amssymb}
\usepackage{nicefrac}
\usepackage{siunitx}
\usepackage{graphicx}
\usepackage{algorithm,algorithmic}
\usepackage[dvipsnames]{xcolor}
\usepackage{amsopn,amsmath,amsthm}
\usepackage{hyperref}
\usepackage{cleveref}
\usepackage[caption=false]{subfig}
\usepackage[export]{adjustbox}
\ifpdf
  \DeclareGraphicsExtensions{.eps,.pdf,.png,.jpg}
\else
  \DeclareGraphicsExtensions{.eps}
\fi

\hypersetup{
  frenchlinks=false,
  pdfborder={0 0 0},
  naturalnames=false,
  hypertexnames=false,
  breaklinks
}

\DeclareMathOperator{\Tr}{Tr}
\def\tr{\top}
\def\algcom#1{\hfill\textcolor{gray}{\textbackslash\textbackslash #1}}

\bibliographystyle{plain}

\newtheorem{proposition}{Proposition}

\theoremstyle{definition}

\theoremstyle{remark}
\newtheorem{remark}{Remark}

\title{Redesigning the ensemble Kalman filter\\ with a dedicated model of epistemic uncertainty}

\author[1]{Chatchuea Kimchaiwong}
\author[2]{Jeremie Houssineau}
\author[3]{Adam M. Johansen}
\affil[1]{Warwick Mathematics Institute, University of Warwick, UK.}
\affil[2]{Division of Mathematical Sciences, Nanyang Technological University, Singapore.}
\affil[3]{Department of Statistics, University of Warwick, UK.}

\date{}

\newcommand{\ack}{\section*{Acknowledgements} AMJ acknowledges the financial support of the United Kingdom Engineering and Physical Sciences Research Council (EPSRC; grants EP/R034710/1 and EP/T004134/1) and by United Kingdom Research and Innovation (UKRI) via grant number EP/Y014650/1, as part of the ERC Synergy project OCEAN.}

\ifpdf
\hypersetup{
  pdftitle={Redesigning the ensemble Kalman filter with a dedicated model of epistemic uncertainty},
  pdfauthor={C. Kimchaiwong, J. Houssineau and A. M. Johansen}
}
\fi

\begin{document}

\maketitle

\begin{abstract}
The problem of incorporating information from observations received serially in time is widespread in the field of uncertainty quantification. Within a probabilistic framework, such problems can be addressed using standard filtering techniques. However, in many real-world problems, some (or all) of the uncertainty is epistemic, arising from a lack of knowledge, and is difficult to model probabilistically. This paper introduces a \emph{possibilistic} ensemble Kalman filter designed for this setting and characterizes some of its properties. Using possibility theory to describe epistemic uncertainty is appealing from a philosophical perspective, and it is easy to justify certain heuristics often employed in standard ensemble Kalman filters as principled approaches to capturing uncertainty within it.
The possibilistic approach motivates a robust mechanism for characterizing uncertainty which shows good performance with small sample sizes, and can outperform standard ensemble Kalman filters at given sample size, even when dealing with genuinely aleatoric uncertainty.
\end{abstract}

\paragraph{Keywords:}Bayesian inference; State-Space Model; Possibility Theory

\section{Introduction}

Dynamical real-world systems are typically represented as a hidden process in some state space, indirectly observed via a measurement process. The usual approach to characterizing the unknown state and uncertainty about it is to combine the observed information with given prior information in an approach known as filtering, or sometimes \emph{data assimilation} \cite{reich2015probabilistic}. However, this is non-trivial due not only to imprecise measurements, but also to model misspecification \cite{sarkka2023bayesian}.

Perhaps the best-known data-assimilation algorithm is the Kalman filter (KF) \cite{smc:hmm:Kal60}, which is the optimal filter for the linear Gaussian state-space model, as detailed, e.g., in \cite{sarkka2023bayesian}. However, its application is limited by its high computational costs when dealing with large state spaces, and by the fact that it cannot be applied directly to nonlinear models \cite{petrie2008localization}. Several modifications of the KF have been developed to extend its scope. The extended Kalman Filter can deal with a mild level of nonlinearity by approximating the model with a first order Taylor series expansion, as described, e.g., in \cite{terejanu2008extended}. The unscented Kalman Filter (UKF) approximates the distribution with Gaussian distribution using a set of sigma points and their weights, often giving better accuracy for highly nonlinear models \cite{wan2000unscented}. Still, the accuracy of the UKF depends upon the tuning of certain algorithmic parameters which determine the locations of the sigma points. Moreover, the number of sigma points (and hence the computational cost and accuracy) is fixed. Another widely used algorithm is the ensemble Kalman filter (EnKF) \cite{burgers1998analysis, tippett2003ensemble}, which uses an ensemble to approximate the distributions of interest. The strengths of the EnKF are that it does not require the calculation of the variance (avoiding costly matrix inversions in high-dimensional problems) and can be used with a nonlinear model without approximating it with its linear tangent model \cite{petrie2008localization} as one simply needs to be able to sample from the transition kernel of the dynamic model---although this comes at the cost of further approximating a non-Gaussian distribution with a Gaussian one. The EnKF must be used carefully with high-dimensional problems as constraints on computing power typically lead to an ensemble size relatively small compared to the dimension of the state space, leading to problems such as spurious correlation and underestimated variance, affecting the algorithm's performance. Thus, the recent development in this field mainly focuses on reducing computational cost and increasing the forecast's accuracy.

Although there has been much recent development in the field of data assimilation, e.g., \cite{arcucci2020neural, asch2016data, kalnay2003atmospheric}, the opportunities offered by alternative representations of the uncertainty have not been fully explored. Yet, there is a clear motivation for doing so: In many situations in which the EnKF is used, the main sources of uncertainty are \begin{enumerate*}[label=\roman*)]
    \item the initial state of the system,
    \item unknown deviations in the model, e.g., unknown forcing terms, and
    \item the lack of data.
\end{enumerate*}
These sources of uncertainty can be argued to be of an epistemic nature and, given their predominance, it is important to model them as faithfully as possible. This makes appealing the use of a dedicated framework for epistemic uncertainty. Many different approaches have been suggested in the literature to deal with epistemic uncertainty, such as fiducial inference \cite{fisher1935fiducial, hannig2016generalized} and Dempster--Shafer theory \cite{dempster2008dempster}. While fiducial inference gives a probabilistic statement about a parameter without relying on a prior probability distribution, it still represents the information a posteriori via probability distributions. On the other hand, Dempster--Shafer theory leverages (fuzzy) set-valued probabilities to represent both aleatoric and epistemic uncertainty \cite{pearl1990reasoning}. Yet, its formulation, based on sums over all subsets of the state space, restricts its application to discrete cases and, even then, does not scale easily to larger problems. Another approach, termed possibility theory \cite{de1999conditioning, dubois2015possibility, dubois2000possibility}, proposes to model exclusively epistemic uncertainty by changing the algebra underpinning probability theory, i.e., summation/integration is replaced by maximisation. Many probabilistic concepts and analyses can be adapted to possibility theory, for instance, \cite{houssineau2019elements} proves a Bernstein-von Mises theorem for a possibilistic version of Bayesian inference. In real systems, some uncertainty will be well modelled as aleatoric and some will of an epistemic nature; \cite{houssineau2018parameter} shows that probability theory and possibility theory can be combined in this case, offering an extension of the probabilistic approach rather than an alternative. Possibility theory can be applied to complex problems, for instance \cite{houssineau2021linear} introduces an analogue to spatial point processes, and \cite{houssineau2018smoothing} shows that the Kalman filter can also be derived in this framework. These results highlight the potential of possibility theory for data assimilation, which we explore in this work via a possibilistic analogue of the EnKF. Related work includes \cite{bishop2023robust} in which a robust version of the Kalman filter is proposed based on sets of probability distributions, with a motivation close to the one for this work. 

Our contributions are as follows:
\begin{enumerate}
    \item We introduce a new notion of Gaussian fitting for possibility theory, and contrast it against the standard moment-matching procedure. We highlight the potential of this approach for providing principle grounds for standard heuristics used in the EnKF.
    \item We derive a possibilistic analogue of the EnKF, adapting the initialisation and predictions steps to the assumed epistemic uncertainty in the model, and showing how the update step of existing versions of the EnKF remain valid in this context.
    \item We assess the performance of our method against two versions of the EnKF and against the UKF, considering both linear and nonlinear dynamics as well as fully and partially observed processes.
\end{enumerate}

The remainder of this paper is organised as follows. \Cref{sec: Background} starts with the state estimation problem for the state-space model before giving the filtering techniques used to estimate such states under the probabilistic framework. Then, the details of the possibilistic framework developed to tackle epistemic uncertainty are given, with some concepts defined analogously to the probabilistic framework. Next, the possibilistic EnKF is introduced in \Cref{sec: Possibilistic EnKF} to estimate the system's state in the presence of epistemic uncertainty. After that, the performance of our algorithm is assessed on simulated data in a range of situations and compared against standard baselines in \Cref{sec: Results}.

\section{Background}
\label{sec: Background}

We first briefly review the standard EnKF and the underlying filtering problem in \Cref{subsec: State Estimation Problem and Filtering Technique} before introducing, in \Cref{subsec: Possibilistic Framework}, the framework based on which the proposed method will be derived.

\subsection{State estimation problem and filtering technique}
\label{subsec: State Estimation Problem and Filtering Technique}

The standard, probabilistic Gaussian state-space model describes the evolution of a hidden \emph{state} process, $(X_k)_{k > 0}$, and its associated observation process, $(Y_k)_{k>0}$, via
\begin{subequations}
\label{Gaussian State Space equations}
\begin{align}
\label{eq:SSMprediction}
X_k & = F_k(X_{k-1}) +\epsilon_k &\text{where}  &\quad \epsilon_k \sim \mathrm{N}(0,U_k) \\
\label{eq:SSMobservation}
Y_k & = H_k(X_k) +\varepsilon_k &\text{where}  &\quad \varepsilon_k \sim \mathrm{N}(0,V_k),
\end{align}
\end{subequations}
with $U_k$ and $V_k$ positive definite matrices and with $\mathrm{N}(\mu,\Sigma)$ denoting a Gaussian distribution of mean $\mu$ and covariance $\Sigma$.

We consider the setting in which a stream of observations $y_1,y_2,\dots$ arrives sequentially in time, and we want to obtain estimates of the hidden states as they arrive. Thus, in the Bayesian context, we characterise our knowledge of the state $X_k$ at time step $k > 0$ given the realisations $y_{1:k} \doteq (y_1,y_2,\dots,y_k)$ of the observations $Y_{1:k}$ via the filtering density $p(x_{k} | y_{1:k})$ which can be computed recursively, see, e.g., \cite{sarkka2023bayesian}, as
\begin{align*}
  \tag{prediction}
  p(x_{k} | y_{1:k-1}) & = 
\int p(x_{k} | x_{k-1})p(x_{k-1} |y_{1:k-1}) \mathrm{d} x_{k-1}
\\
\tag{update}
p(x_{k} | y_{1:k}) & = 
\frac{p(y_k | x_k)p(x_k | y_{1:k-1})}{\int p(y_k | x_k)p(x_k | y_{1:k-1}) \mathrm{d} x_k}. 
\end{align*}

The Kalman Filter (KF) provides a closed-form recursion for the prediction and update steps for linear Gaussian state-space models (see, e.g., \cite{sarkka2023bayesian}). However, many models in this world are nonlinear, meaning that the KF is not directly applicable and, unfortunately, there are few other settings in which analytic recursions are available. This motivates the development of numerous alternatives, including the UKF and the EnKF.

While the UKF can provide good performance for moderately nonlinear models, which depends upon careful choice of several algorithmic tuning parameters, and, as the number of sigma points is fixed, there is little scope for adjusting computational cost or accuracy. The EnKF algorithm takes a different approach; it can be viewed as an approximation to the KF which employs an ensemble of samples $\{X_k^i\}_{i=1}^N$ to approximate the parameters of the Gaussian distribution at time $k$. In particular, the ensemble's mean and variance are used to approximate those of the filtering distribution.
The EnKF algorithm then proceeds recursively via the prediction and update steps. The prediction step mirrors that of the KF but uses an ensemble approximation, which means that $F_k$ need not be linear as long as it can be evaluated pointwise as in \Cref{alg: prediction step of EnKF} \cite{burgers1998analysis, tippett2003ensemble}.

\begin{algorithm}
\caption{Prediction step of EnKF at time step $k$}
\label{alg: prediction step of EnKF}
	\renewcommand{\algorithmicrequire}{\textbf{Input:}}
	\renewcommand{\algorithmicensure}{\textbf{Output:}}
	\begin{algorithmic}[1]
		\REQUIRE The posterior ensemble $\{\hat{X}_{k-1}^i\}_{i=1}^N$ at time step $k-1$.
		\ENSURE  The predictive ensemble $\{X_{k}^i\}_{i=1}^N$ at time step $k$ with mean $\mu_k$ and variance $\Sigma_k$.\\
        \FOR{$i \in \{1, \ldots, N \}$} 
        \STATE $\epsilon_{k}^i \sim \mathrm{N}(0,U_k)$ \algcom{Sample noise}
		\STATE $X_{k}^i = F_k(\hat{X}_{k-1}^i) + \epsilon_{k}^i$  \algcom{Predict $i$-th particle}
        \ENDFOR
        \STATE $\mu_k \leftarrow \frac{1}{N}\sum_{i=1}^N X_k^i$ \algcom{Predictive mean}
        \STATE $\Sigma_k \leftarrow \frac{1}{N-1}\sum_{i=1}^N (X_k^i - \mu_k)(X_k^i - \mu_k)^{\tr}$, \algcom{Predictive variance}
\end{algorithmic}
\end{algorithm}

The update step of the EnKF also approximates that of the KF: the mean of the updated distribution is a convex combination of that of the predictive distribution and the observation using the so-called \emph{Kalman gain} as a weight. However, using the traditional Kalman gain can lead to an underestimation in the posterior variance (as it essentially neglects the observation noise) and can lead to filter divergence \cite{whitaker2002ensemble}. A number of variants of the EnKF update step have been developed to address this problem, including both the stochastic EnKF (StEnKF) and the square root EnKF (SqrtEnKF) \cite{katzfuss2016understanding}. The latter updates the deviation of the ensemble by applying a matrix obtained from the square root of a certain matrix. Different variations of the SqrtEnKF exist, depending on how such a matrix is obtained, as summarised in \cite{tippett2003ensemble}. In this work, we consider the version of the SqrtEnKF presented in \cite{whitaker2002ensemble}, which is particularly suitable as a basis for introducing a possibilistic version of the EnKF. The update of the SqrtEnKF for a linear observation model is detailed in \Cref{alg: update step of SqrtEnKF}, where the notation $A^{\nicefrac{1}{2}}$ refers to the Cholesky factor of matrix $A$, and where we use the same notation for a linear function and for its matrix representation.

\begin{algorithm}
\caption{Update step of SqrtEnKF at time step $k$}
\label{alg: update step of SqrtEnKF}
	\renewcommand{\algorithmicrequire}{\textbf{Input:}}
	\renewcommand{\algorithmicensure}{\textbf{Output:}}
	\begin{algorithmic}[1]
		\REQUIRE The predictive ensemble $\{X_{k}^i\}_{i=1}^N$ at time step $k$ with mean $\mu_k$ and variance $\Sigma_k$.
		\ENSURE  The posterior ensemble $\{\hat{X}_{k}^i\}_{i=1}^N$ at time step $k$.
		\STATE $S_k \leftarrow H_k\Sigma_{k}H_k^{\tr} + V_k$ \algcom{Covariance of the innovation from the ensemble}
            \STATE $K_k \leftarrow \Sigma_{k}H_k^{\tr} S_{k}^{-1}$ \algcom{Standard Kalman gain}
            \STATE $\tilde{K}_k \leftarrow \Sigma_kH_k^{\tr}\big(S_{k}^{\nicefrac{1}{2}}\big)^{-\tr}\big( S_{k}^{\nicefrac{1}{2}} + V_k^{\nicefrac{1}{2}} \big)^{-1}$ \algcom{Adjusted Kalman gain}
            \STATE $\hat{\mu}_{k} \leftarrow \mu_k + K_k(Y_k - H_k \mu_k)$ \algcom{Posterior mean}
            \FOR{$i \in \{1, \ldots, N \}$}
            \STATE $\hat{e}^{i}_{k} \leftarrow (I_{n} -\tilde{K}_kH_k)(X_{k}^i - \mu_{k})$ \algcom{Posterior deviation}
            \STATE $\hat{X}_{k}^i \leftarrow \hat{e}^{i}_{k} + \hat{\mu}_{k}$ \algcom{Updated particle}
            \ENDFOR
\end{algorithmic}
\end{algorithm}

For the ensemble-based method, the accuracy of the state estimate depends highly on the ensemble size. The situation in which the ensemble is too small to adequately represent the system is termed undersampling and leads to four main problems which have been termed: inbreeding, spurious correlation, matrix singularity, and filter divergence \cite{petrie2008localization}. Two widely used techniques to negate the undersampling problems are inflation and localisation \cite{petrie2008localization}. The inflation technique increases the deviation between the sample and the predictive mean so that the predictive covariance from the ensemble is not underestimated and avoids the so-called inbreeding problem. On the other hand, the localisation techniques are implemented to eradicate the spurious correlation and singular matrix, examples being tapering using the Schur product to cut off the long-range correlation or domain localisation where assimilation is performed in a local domain, which is disjoint in a physical space \cite{janjic2011domain}. Each technique needs to be fine-tuned in a problem-specific manner to achieve good accuracy.

We have focused here on approaches directly relevant to the methodology developed below in \Cref{sec: Possibilistic EnKF}; many other methods have been devised to address nonlinearity and non-Gaussianity---see, e.g., \cite{asch2016data, fearnhead2018particle, van2015nonlinear}. 

\subsection{Review of possibility theory}
\label{subsec: Possibilistic Framework}

Possibility theory aims to directly capture epistemic uncertainty about a fixed but unknown element $\theta^*$ in a set $\Theta$, by focusing on the \emph{possibility} of an event related to $\theta^*$ rather than on its probability. A possibility of $1$ corresponds to the absence of evidence against the event taking place \emph{not} evidence that the event took place almost surely. If there is no evidence against an event, say $E$, then there cannot be any evidence against ($E$ or $E'$), with $E'$ any other event, so that the possibility of ($E$ or $E'$) is also equal to $1$. This behaviour shows the notion of possibility does not give an additive measure of uncertainty, and the simplest operation that corresponds to the possibility of a union of events is the \emph{maximum} of their individual possibilities.

The analogue of a probability density is a so-called possibility function, $f:\Theta \to [0,1]$ which must have supremum $1$. Just as probabilities are best described by measures, possibilities are naturally cast as outer measures, $\bar{P}$ and just as one may obtain a probability from  a density one can obtain a possibility from a possibility function by taking its supremum, that is, for any $A \subseteq \Theta$, $\bar{P}(A)=\sup_{\theta \in A}f(\theta)$. If it holds that $f(\theta) = 1$ for some $\theta \in A$, then we will indeed have $\bar{P}(A) = \bar{P}(A \cup B) = 1$ for any $B \subseteq \Theta$, as required. Formally, the set function $\bar{P}$ is an outer measure verifying $\bar{P}(\Theta) = 1$; we therefore refer to such set functions as \emph{outer probability measures} (o.p.m.s). If $\bar{P}(A) = 1$ models the absence of opposing evidence, $\bar{P}(A) = 0$ has essentially the same interpretation as with probabilities, i.e., that $\theta^* \in A$ is impossible, which requires very strong evidence. In general, $\bar{P}(A)$ can be interpreted as an upper bound for \emph{subjective} probabilities of the event $\theta^* \in A$, i.e., the maximum probability that we would be ready to assign to this event is $\bar{P}(A)$. Under this interpretation, the statement $\bar{P}(A) = 1$ is uninformative: we could assign any probability in the interval $[0,1]$ to this event.

To formally define the quantities described above, we consider an analogue of a random variable, referred to as an \emph{uncertain variable}, defined as a mapping $\bm{\theta} : \Omega \to \Theta$, and interpreted as follows: if the true outcome in $\Omega$ is $\omega$ then $\bm{\theta}(\omega)$ is the true value of the parameter. The set $\Omega$ plays a similar role to the elementary event space in probability theory, except that it is not equipped with any probabilistic structure; instead, it contains a true element $\omega^*$ and it holds by construction that $\bm{\theta}(\omega^*) = \theta^*$. An event can now be defined as a subset of $\Omega$, for instance $\{\omega \in \Omega : \bm{\theta}(\omega) \in A\}$ for some $A \subseteq \Theta$. We will use the standard shortcut and write this event as $\bm{\theta} \in A$. An important difference between possibility theory and probability theory is that uncertain variables do not characterise possibility functions, instead possibility functions \emph{describe} uncertain variables in a way that is not unique. For instance, if the available information is modelled by a possibility function $f$ describing an uncertain variable $\bm{\theta}$, then any function $g$ such that $g \geq f$, i.e.\ $g(\theta) \geq f(\theta)$ for any $\theta \in \Theta$, also describes $\bm{\theta}$. This is because $g$ discarded some evidence about $\bm{\theta}$, that is, $g$ is less informative than $f$. In particular, we can consider the possibility function equal to $1$ everywhere, denoted $\bm{1}$, which upper bounds all other possibility functions. The function $\bm{1}$ models the total absence of information.

To gain information from data, Bayesian inference can be performed within possibility theory in a way that is very similar to the standard approach: If $Y$ is a random variable with probability distribution $p(\cdot \mid \theta^*)$ belonging to a parameterised family of distributions\footnote{Although $p(\cdot \mid \theta)$ is not formally a conditional probability distribution, it is useful to slightly abuse notations and write it as such.} $\{p(\cdot \mid \theta) : \theta \in \Theta\}$, if we observe a realisation $y$ of $Y$, and if the information available about $\theta^*$ a priori is modelled by the possibility function $f$, then the information available a posteriori can be modelled by the possibility function $f(\cdot \mid y)$ characterised by \cite{houssineau2018parameter}
\begin{equation}
\label{eq:updateByRandomObs}
f(\theta \mid y) = \frac{p(y \mid \theta)f(\theta)}{\sup_{\theta' \in \Theta} p(y \mid \theta') f(\theta')}, \qquad \theta \in \Theta.
\end{equation}
We borrow from the Bayesian nomenclature and simply refer to $f$ and $f(\cdot \mid y)$ as the prior and the posterior. Since we can always start from the uninformative prior $f = \bm{1}$, it is easy to find posterior possibility functions by inserting different likelihoods; for instance, if $\Theta = \mathbb{R}^n$ and if the likelihood is a multivariate Gaussian distribution with mean $\theta$ and known variance $\Sigma$, i.e., $p(y \mid \theta) = \mathrm{N}(y; \theta, \Sigma)$, then
\[
f(\theta \mid y) = \exp\bigg( -\frac{1}{2}(\theta - y)^{\tr}\Sigma^{-1}(\theta - y) \bigg) \doteq \overline{\mathrm{N}}(\theta; y, \Sigma).
\]
Such a Gaussian possibility function is a conjugate prior for the Gaussian likelihood as in the probabilistic case, and shares many of the properties of its probabilistic analogue. It can be advantageous to parameterise a Gaussian possibility function, say $\overline{\mathrm{N}}(\theta; \mu, \Sigma)$, by the precision matrix $\Lambda = \Sigma^{-1}$; indeed, the precision matrix does not need to be positive definite for the Gaussian possibility function to be well defined, with positive semi-definiteness being sufficient. In particular, this means that setting $\Lambda$ to the $0$ matrix is possible, with the Gaussian possibility function being equal to $\bm{1}$ in this case. A simple way to quantify the amount of epistemic uncertainty in a possibility function is to consider the integral $\int f(\theta) \mathrm{d} \theta$ as in \cite{chen2021observer}, when defined. This notion of uncertainty is consistent with the partial order on possibility functions: if $g$ is less informative than $f$ then the integral of $g$ will obviously be larger then that of $f$. In particular, the uncertainty of a Gaussian possibility function is $\sqrt{|2 \pi \Sigma|}$, with $|\cdot|$ denoting the determinant; a quantity often used to quantify how informative a given Gaussian distribution is.

If $\bm{\theta}$ and $\bm{\psi}$ are two uncertain variables, respectively on sets $\Theta$ and $\Psi$, jointly described by a possibility function $f$ on $\Theta \times \Psi$, then Bayes' rule can also be expressed via a more standard notion of conditioning as \cite{de1999conditioning}
\[
f(\theta \mid \psi) = \frac{f(\theta,\psi)}{\sup_{\theta' \in \Theta} f(\theta', \psi)},
\]
where $\sup_{\theta' \in \Theta} f(\theta', \psi)$ is the marginal possibility function describing $\bm{\psi}$. The conditional possibility function $f(\theta \mid \psi)$ allows independence to be defined simply: if $f(\theta \mid \psi)$ is equal to the marginal $f(\theta) = \sup_{\psi' \in \Psi} f(\theta, \psi')$ for any $\theta \in \Theta$, then $\bm{\theta}$ is said to be independent from $\bm{\psi}$ under $f$. As with most concepts in possibility theory, the notion of independence depends on the choice of possibility function; if $\bm{\theta}$ and $\bm{\psi}$ are not independent under $f$, we could find another possibility function $g$ such that $g \geq f$ and such that $\bm{\theta}$ is independent of $\bm{\psi}$ under $g$. This will be key later on for simplifying high-dimensional Gaussian possibility functions in a principled way. When there is no ambiguity about the underlying possibility function, we will simply say that $\bm{\theta}$ is independent of $\bm{\psi}$.

Although the parameters $\mu$ and $\Sigma$ of the Gaussian possibility functions are reminiscent of the probabilistic notions of expected value and variance, these notions have to be redefined in the context of possibility theory. Asymptotic considerations \cite{houssineau2019elements} lead to a notion of expected value $\mathbb{E}^*(\cdot)$ based on the mode, that is
\(
\mathbb{E}^*(\bm{\theta}) = \arg\max_{\theta \in \Theta} f(\theta),
\)
for any uncertain variable $\bm{\theta}$ described by $f$, and to a local notion of variance based on the curvature of $f$ at $\mathbb{E}^*(\bm{\theta})$, assuming that the latter is a singleton, an assumption that we will make throughout this work. These quantities correspond to the approximation often referred to as Laplace/Gaussian approximation, and make sense in the context of epistemic uncertainty: without a notion of variability, we simply look at our best guess $\mathbb{E}^*(\bm{\theta})$ for the unknown $\theta^*$ and at how confident we are in that guess, i.e., how fast the possibility decreases around it. Although the expected value and variance associated with the Gaussian possibility function $\overline{\mathrm{N}}(\mu, \Sigma)$ are indeed $\mu$ and $\Sigma$, they will differ in general from their probabilistic counterparts.

In order to make inference for dynamical systems simpler, we introduce a Markovian structure as follows:  We consider a sequence of uncertain variables $\bm{x}_0, \bm{x}_1, \dots$ in a set $\mathsf{X}$ and assume that $\bm{x}_k$ is described by a possibility function $f_k$ for some given $k \geq 0$. Following the standard approach, we assume that $\bm{x}_{k+1}$ is conditionally independent of $\bm{x}_{k-\delta}$ given $\bm{x}_k$, for any $\delta > 0$. The available information about $\bm{x}_{k+1}$ given a value $x_k$ of $\bm{x}_k$ can then be modelled by a conditional possibility function $f_{k+1|k}( \cdot \mid x_k)$, and prediction can be performed via
\[
f_{k+1}(x_{k+1}) = \sup_{x_k \in \mathsf{X}} f_{k+1|k}(x_{k+1} \mid x_k) f_k(x_k), \qquad x_{k+1} \in \mathsf{X}.
\]
We will focus in particular on the case where an analogue of \eqref{eq:SSMprediction} holds, that is
\[
\bm{x}_k = F_k(\bm{x}_{k-1}) + \bm{u}_k,
\]
with $\bm{u}_k$ an uncertain variable described by $\overline{\mathrm{N}}(0, U_k)$. In this case, the term $\bm{u}_k$ cannot be interpreted as noise and models instead deviations between the model and the true dynamics: the true states $x^*_k$ and $x^*_{k-1}$ are related via $x^*_k = \tilde{F}_k(x^*_{k-1})$ with $\tilde{F}_k$ potentially different from $F_k$; the true value $u^*_k$ of interest is therefore equal to the difference $\tilde{F}_k(x^*_{k-1}) - F_k(x^*_{k-1})$ between the model and the true dynamics.

In the case where the prediction is deterministic, that is $\bm{x}_{k+1} = F_k(\bm{x}_k)$ for some possibly non-linear mapping $F_k$, we can use the change of variable formula \cite{baudrit2008representing} to characterise the possibility function $f_{k+1}$ as
\begin{equation}
f_{k+1}(x_{k+1}) = \sup\big\{ f_k(x_k) : x_k \in F_k^{-1}(x_{k+1}) \big\},
\label{change of variable equation}
\end{equation}
where $F_k^{-1}(x_{k+1})$ is the (possibly set-valued) pre-image of $x_{k+1}$ via $F_k$, and $\sup \emptyset = 0$ by convention. A result that is specific to possibility theory is that the expected values of $\bm{x}_{k+1}$ and $\bm{x}_k$ are related via $\mathbb{E}^*(\bm{x}_{k+1}) = F_k(\mathbb{E}^*(\bm{x}_k))$ without assumptions on $F_k$. This result will be key in our approach since it allows to compute the expected value at time step $k+1$ with a single application of $F_k$, rather than averaging ensembles as in the EnKF. This result does not hold for the mode of probability densities because the corresponding change of variable formula include the Jacobian of the transformation, which shifts the mode in non-trivial ways.

\section{Possibilistic EnKF}
\label{sec: Possibilistic EnKF}

It is has been shown in \cite{houssineau2018smoothing} that an analogue of the Kalman filter can be derived in the context of possibility theory, and that the corresponding expected value and variance are the same as in the probabilistic case. However, as with the probabilistic KF, its applicability is limited to linear models---so it is natural to develop extensions that accommodate nonlinearity. Since the probabilistic EnKF is flexible in computational cost and can be effectively applied to nonlinear models, we explore the use of EnKF-like ideas in the possibilistic setting. Here, we present a novel \emph{possibilistic EnKF} (p-EnKF) and show that analogues of inflation and localisation arise naturally in the context of possibility theory rather than being imposed as heuristics to improve performance as in the standard setting.

For the p-EnKF, we use an ensemble of weighted particles to characterise the possibility function and follow a similar path to the standard EnKF by assuming that the underlying possibility function is Gaussian in order to proceed with a KF-like update. For this purpose, we need to define two operations:
\begin{enumerate*}[label=\arabic*)]
    \item how to approximate a given possibility function by weighted particles? This will be necessary for initialisation at time step $k=0$, and
    \item how to define a Gaussian possibility function based on weighted particles? This will be necessary to carry out the Kalman-like update.
\end{enumerate*}
We start by answering the latter question in \Cref{mean and variance poss EnKF}, before moving to the former in \Cref{Ini poss EnKF}. Based on this, we detail the prediction and update mechanism of the p-EnKF in \Cref{Pre poss EnKF,Up poss EnKF} respectively, and consider some extensions in \Cref{sec:extensions}.

\subsection{Ensemble approximation}
\label{sec:ensembleApproximation}

We consider the problem of defining a set $\{(w_i,x_i)\}_{i=1}^N$ of $N$ weighted particles in $\mathbb{R}^n$, which will allow the approximate solution of optimisation problems of the form $\max_{x \in \mathbb{R}^n} \varphi(x)f(x)$, for some bounded function $\varphi$ on $\mathbb{R}^n$ and for a given possibility function on $\mathbb{R}^n$. Although our approach could be easily formulated on more general spaces than $\mathbb{R}^n$, Euclidean spaces are sufficient for our purpose in this work and help to simplify the presentation. As is common in Monte Carlo approaches, we do not want the particles or weights to depend on $\varphi$, so as to allow us to solve this problem for many different such functions using a single sample, we therefore focus on directly approximating $f$: placing particles in locations which allow a good characterization of $f$ allows us to approximate the optimization for a broad class of regular $\varphi$. Following the principles of Monte Carlo optimization \cite[Chapter 5]{robert1999monte}, and assuming that $\int f(x) \mathrm{d} x < +\infty$, we can sample from the probability distribution proportional to $f$ to obtain the particles $\{x_i\}_{i=1}^N$ and then weight these particles with $w_i = f(x_i)$, for any $i \in \{1,\dots,N\}$. We then obtain the approximation
\[
\max_{x \in \mathbb{R}^n} \varphi(x)f(x) \approx \max_{i \in \{1,\dots,N\}} w_i \varphi(x_i).
\]
This approximation can be proved to converge when $N \to \infty$ under mild regularity conditions on $\varphi$ as long as the supports of $\varphi$ and $f$ are not disjoint. This is a simple default choice for the methods developed below, although more sophisticated approaches are possible.

\begin{remark}
One initially appealing idea is to consider a form of maximum entropy principle by identifying the constraints we want to impose on the probability distributions from which we may wish to sample, and to pick the distribution of maximum entropy subject to satisfying these constraints. This is a standard argument for specifying the ``least informative'' distribution with particular properties, e.g., given a distribution when the mean and variance the maximum entropy distribution is the Gaussian with these moments. In the considered context, the constraints comes from the interpretation of o.p.m.s as upper bounds for probability distributions. Specifically, for an o.p.m.\ $\bar{P}$, we could draw an ensemble from the probability density $p$ satisfying the following criteria: $p$ has the maximum entropy amongst those distributions satisfying $\int_A p(x) dx \leq \bar{P}(A)$ for any $A \subseteq \mathbb{R}^n$. The distribution $p$ would typically be more diffuse than the one proportional to $f$, which can be beneficial for optimizing functions $\varphi$ taking large values in the ``tails'' of $f$. However, obtaining such a distribution is in general highly non-trivial beyond discrete or univariate problem.
\end{remark}

As there are no constraints on the way in which particle locations, $\{x_i\}_{i=1}^N$ are selected, deterministic schemes such as quasi Monte Carlo \cite{lemieux2009monte, guth2021quasi} could also be considered. In \Cref{sec: Results}, we will consider using the same approach as in the UKF to define particle locations, which will prove to be beneficial in some situations.

In the situations that we will consider, it will always be the case that $f$ is maximised at a single known element $x^* \in \mathbb{R}^n$. Therefore, it makes sense to add one particle $x_0 = x^*$ with weight $w_0 = 1$. This will prove beneficial when fitting a Gaussian possibility function to the ensemble $\{(w_i,x_i)\}_{i=0}^N$, as detailed in the next section.

\subsection{Best-fitting Gaussian possibility function} \label{mean and variance poss EnKF}

We consider the situation in which the epistemic uncertainty is captured by a set $\{(w_i,x_i)\}_{i=0}^N$ of $N+1$ weighted particles in $\mathbb{R}^n$, with $w_i = 1$ if and only if $i = 0$. We denote by $\tilde{f}$ the ``empirical'' possibility function defined based on the ensemble as $\tilde{f}(x) = \max_{i \in \{0,\dots,N\}} w_i \bm{1}_{x_i}(x)$, 
with $\bm{1}_{x_i}$ the indicator of the point $x_i$. The standard approach would be to simply compute the (weighted) mean and variance of the ensemble, but these notions do not apply directly to possibility functions, and the curvature-based possibilistic notion of variance is not defined for an empirical possibility function like $\tilde{f}$. Instead, we aim to fit a Gaussian possibility function $\overline{\mathrm{N}}(\mu, \Sigma)$ to this (weighted) ensemble and the considered variance will simply be the second parameter $\Sigma$ of this fitted Gaussian.

To fit the Gaussian $\overline{\mathrm{N}}(\mu, \Sigma)$ to the ensemble, we need a notion of best fit. Based on the partial order between possibility functions described in \Cref{subsec: Possibilistic Framework}, we can easily ensure that $\overline{\mathrm{N}}(\mu, \Sigma)$ does not introduce artificial information---at least at locations $\{x_i\}_{i=0}^N$ -- by requiring that $\overline{\mathrm{N}}(\mu, \Sigma) \geq \tilde{f}$. Since $\tilde{f}(x) = 0$ when $x \notin \{x_i\}_{i=0}^N$, we can simplify this condition to: $\overline{\mathrm{N}}(x_i; \mu, \Sigma) \geq w_i$ for all $i \in \{0,\dots,N\}$. The inequality $\overline{\mathrm{N}}(\mu, \Sigma) \geq \tilde{f}$ forces the expected values of $\overline{\mathrm{N}}(\mu, \Sigma)$ and $\tilde{f}$ to coincide, from which we can deduce that $\mu = x_0$.

For the variance, we aim to minimise the uncertainty in the possibility function $\overline{\mathrm{N}}(\mu, \Sigma)$, i.e., minimising $\int \overline{\mathrm{N}}(x; \mu, \Sigma) \mathrm{d} x \propto \sqrt{|\Sigma|}$, which is equivalent to maximising $\log |\Lambda|$ with $\Lambda = \Sigma^{-1}$ the precision matrix, thanks to the properties of the determinant. The precision matrix is then most naturally defined as the one solving the constrained optimisation problem
\begin{align}
\label{eq:secondOptimisation}
 \max_{\Lambda \in \mathbf{S}^d_+} \log |\Lambda|  &
& \text{subject to} \quad  \overline{\mathrm{N}}(x_i; \mu, \Sigma) \geq w_i, \quad 1 \leq i \leq N,
\end{align}
where $\mathbf{S}^d_+$ is the cone of positive semi-definite $d \times d$ matrices. Since $\mu = x_0$, the corresponding constraint $\overline{\mathrm{N}}(x_0; \mu, \Sigma) \geq w_0$ is automatically satisfied and we only need to ensure our Gaussian possibility function upper bounds the ensemble at the other $x_i$'s. Using the invariance of the trace to cyclic permutations allows these constraints to be rewritten as
\begin{equation}
  (x_i - \mu)^{\tr} \Lambda (x_i - \mu) = \Tr \left( C_i \Lambda \right) \leq -2 \log w_i, \qquad 1\leq i \leq N,
\label{variance_ensemble_poss}
\end{equation}
where $C_i  = (x_i - \mu)(x_i - \mu)^{\tr}$; this is more convenient for numerical optimization because $\Tr (C_i \Lambda)$ is linear in $\Lambda$.

\begin{remark}
\label{rem:oneDimensionalRecovery}
In the one-dimensional ($n=1$) case, this optimisation problem can be easily solved for any $N \geq 1$: the $i^{\text{th}}$ constraint in \eqref{variance_ensemble_poss} can be expressed directly for the (scalar) variance as $\sigma^2 \doteq \Sigma \leq -2 \log w_i / (x^i - \mu)^2$, and \eqref{variance_ensemble_poss} reduces to a single constraint:
\begin{equation}
\label{eq:oneDimensionalConstraint}
\sigma^2 \leq \min_{i \in \{1,\dots,N\}} \frac{-2 \log w_i}{(x_i - \mu)^2}.
\end{equation}
In particular, setting $\sigma^2$ to the right hand side of this inequality will maximise the precision, hence solving the optimisation problem \eqref{eq:secondOptimisation}. In this case, if it happens that $w_i = f(x_i)$ with $f$ a Gaussian possibility function, then the associated variance will be recovered exactly, even with $N=1$. Although this result does not generalise easily to higher dimensions, it highlights the potential of this method for recovering the variance from an ensemble, as will be studied later in this section.
\end{remark}

We will denote by $\Lambda^*\big(\{(w_i,x_i)\}_i\big)$ the solution to \eqref{eq:secondOptimisation}, omitting the limits on $i$ for concision. In addition to providing an estimate for the variance of the best-fitting Gaussian, this approach provides a measure of how far from Gaussian the ensemble is. Indeed, a notion of distance can be defined based on the gaps $-\Tr \left( C_i \Lambda \right) - 2 \log w_i$, $i \in \{1,\dots,N\}$.

We will be interested in the relationship between the best Gaussian fit for a given ensemble and the one for a linear and invertible transformation of that ensemble, which motivates the following proposition.

\begin{proposition}
\label{prop:bestFitUnderTransform}
Let $\{(w_i, x_i)\}_i$ be an ensemble such that $w_i = 1$ if and only if $i = 0$, and let $M$ be an invertible linear map on $\mathbb{R}^n$, it holds that
\[
\Lambda^*\big(\{(w_i,x_i)\}_i\big) = M^{\tr}\Lambda^*\big(\{(w_i,M x_i)\}_i\big) M.
\]
\end{proposition}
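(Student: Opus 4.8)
The plan is to observe that the two constrained optimisation problems defining $\Lambda^*\big(\{(w_i,x_i)\}_i\big)$ and $\Lambda^*\big(\{(w_i,Mx_i)\}_i\big)$ are related by the explicit change of variables $\Lambda \leftrightarrow M^{\tr}\Lambda' M$, under which the feasible set is mapped bijectively onto the feasible set of the other problem and the objective $\log|\cdot|$ changes only by an additive constant; the two argmax sets must therefore correspond under this bijection.

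First I would record how the data of problem \eqref{eq:secondOptimisation} transform. The weights are untouched: the transformed ensemble is $\{(w_i, Mx_i)\}_i$, still with $w_i = 1$ iff $i=0$, so the right-hand sides $-2\log w_i$ of the constraints \eqref{variance_ensemble_poss} are unchanged. Since $\mu = x_0$ for the original ensemble (as established above from $\overline{\mathrm{N}}(\mu,\Sigma) \geq \tilde f$), the mean of the transformed ensemble is $Mx_0 = M\mu$, hence the deviations transform linearly as $Mx_i - M\mu = M(x_i-\mu)$. Consequently, for any $\Lambda' \in \mathbf{S}^d_+$,
\[
(Mx_i - M\mu)^{\tr}\Lambda'(Mx_i - M\mu) = (x_i-\mu)^{\tr} M^{\tr}\Lambda' M (x_i-\mu),
\]
so $\Lambda'$ satisfies the constraints of the transformed problem if and only if $\Lambda \doteq M^{\tr}\Lambda' M$ satisfies those of the original one. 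Moreover $\Lambda' \mapsto M^{\tr}\Lambda' M$ is a bijection of $\mathbf{S}^d_+$ onto itself, since congruence by an invertible matrix preserves positive semi-definiteness and has inverse $\Lambda \mapsto M^{-\tr}\Lambda M^{-1}$; hence it restricts to a bijection between the two feasible sets.

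Next I would compare the objectives: by multiplicativity of the determinant, $|M^{\tr}\Lambda' M| = (\det M)^2\,|\Lambda'|$, so $\log|M^{\tr}\Lambda' M| = \log|\Lambda'| + 2\log|\det M|$, an additive constant independent of $\Lambda'$. Therefore $\Lambda'$ maximises $\log|\cdot|$ over the transformed feasible set if and only if $M^{\tr}\Lambda' M$ maximises $\log|\cdot|$ over the original feasible set. Taking $\Lambda' = \Lambda^*\big(\{(w_i,Mx_i)\}_i\big)$ then gives $\Lambda^*\big(\{(w_i,x_i)\}_i\big) = M^{\tr}\Lambda^*\big(\{(w_i,Mx_i)\}_i\big) M$, which is the claim (here $\Lambda^*$ denotes the optimiser, consistent with the notation introduced above).

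I do not expect a real obstacle here: the argument is a clean change-of-variables, and the only points needing care are the bookkeeping ones, namely that (i) the weights, and thus the constraint levels $-2\log w_i$, are invariant under the transformation, (ii) the mean moves to $M\mu$ so the deviations pick up exactly a factor $M$, and (iii) congruence by $M$ is an automorphism of the cone $\mathbf{S}^d_+$ so that the constraint $\Lambda \in \mathbf{S}^d_+$ is respected in both directions. If one wanted to be fully rigorous about referring to ``the'' solution, one could additionally note that $\log|\cdot|$ is strictly concave on the positive definite cone to justify uniqueness of the maximiser on the relevant part of the feasible set, but this is independent of the transformation argument.
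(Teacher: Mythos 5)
Your proposal is correct and follows essentially the same route as the paper's proof: a change of variables $\Lambda \mapsto M^{\tr}\Lambda M$ under which the constraints correspond exactly and the objective $\log|\cdot|$ shifts by the additive constant $2\log|\det M|$. The extra bookkeeping you supply (the congruence being an automorphism of the positive semi-definite cone, and the remark on uniqueness via strict concavity of $\log|\cdot|$) is sound and only makes explicit what the paper leaves implicit.
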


\begin{proof}
The constraints \eqref{variance_ensemble_poss} for the ensemble $(w_i,M x_i)$ can be rewritten as
\[
\forall i \in \{1,\ldots,N\}: \quad
(Mx_i - M x_0)^{\tr} \Lambda (Mx_i - M x_0) = (x_i - x_0)^{\tr} M^{\tr} \Lambda M (x_i - x_0) \leq -2 \log w_i.
\]
Changing the variable of the optimisation problem from $\Lambda$ to $\tilde{\Lambda} = M^{\tr} \Lambda M$ changes the objective function to $\log | M^{-\tr} \tilde{\Lambda} M^{-1}| = \log |\tilde{\Lambda}| + \text{constant}$. Therefore, the two optimisation problems are equivalent, and their solutions are related as stated.
\end{proof}

From a practical viewpoint, the computational cost for calculating the ensemble's variance via \eqref{eq:secondOptimisation} is greater than that of the probabilistic framework due to the optimisation problem. Yet, \Cref{rem:oneDimensionalRecovery} hints at a potential for an efficient recovery of the true variance in the Gaussian case. This aspect is investigated in \Cref{variance recovery} which displays the average root mean squared error (RMSE) between the true variance and the ensemble's variance based on different sample sizes and for $n \in \{8,16,32\}$, averaged over $1000$ repeats. The true $n \times n$ covariance matrix $\Sigma$ is drawn from the inverse Wishart distribution with $n^2$ degrees of freedom and scale matrix $nI_{n}$, with $I_n$ the identity matrix of dimension $n \times n$, making its reconstruction appropriately challenging. Here, we use a sample $x_i \sim \mathrm{N}(0_n, \Sigma)$ for $i = 1,\dots,N$, with $0_n$ the zero vector of size $n$, to both estimate the probabilistic variance and to be used as the ensemble in \eqref{eq:secondOptimisation} with weights $w_i = \overline{\mathrm{N}}(x_i; 0_n, \Sigma)$. The performance of each case is investigated from the minimum required sample size $N=n$ up to $N=500$. \Cref{variance recovery} shows that the covariance matrix obtained from \eqref{eq:secondOptimisation} indeed converges to the true underlying variance faster than the probabilistic one for small dimensions. In particular, the associated RMSE appears to drop significantly around $N = n^2$.

\begin{figure}[htpb]
\centering
\subfloat[Error between the sample and true variance]{\label{variance recovery}\includegraphics[width=.44\textwidth]{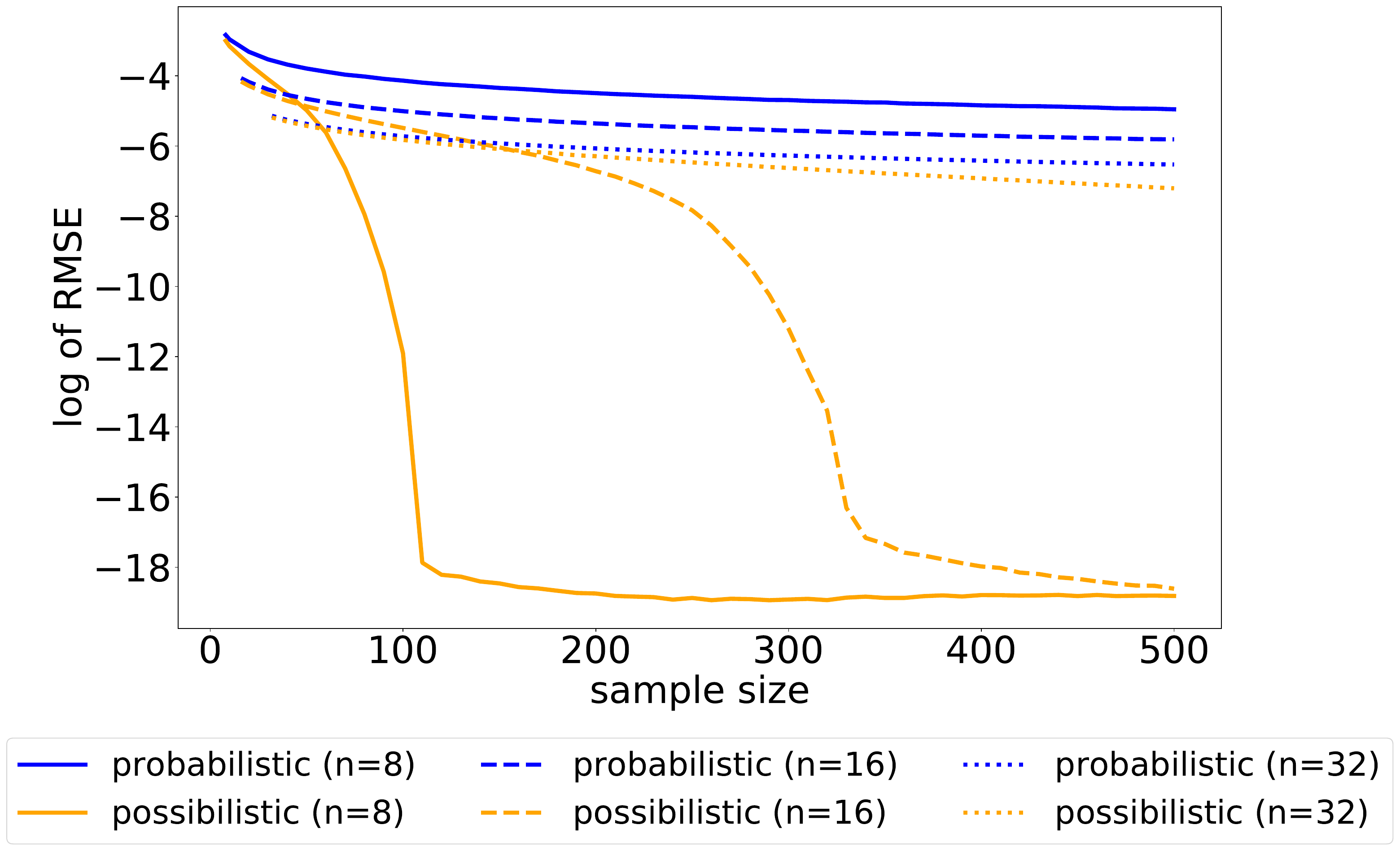}}
\hfill
\subfloat[Computational time of sample variance]{\label{variance processing time}\includegraphics[width=.44\textwidth]{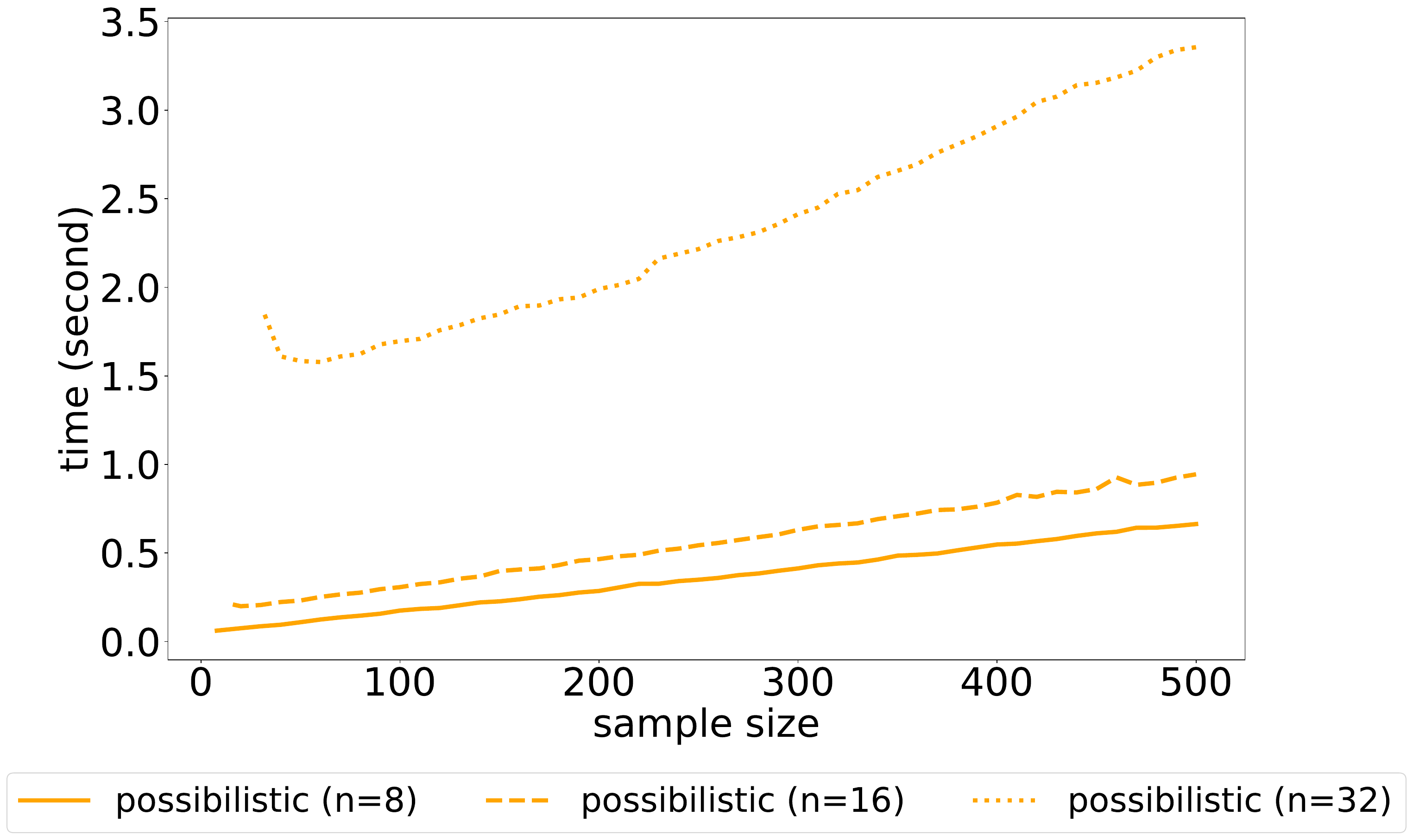}}
\caption{Analysis of the proposed procedure for Gaussian fitting when the underlying possibility function / probability distribution is Gaussian.}
\label{sample covariance}
\end{figure}

The computational cost of the probabilistic approach is very small for the considered range of sample sizes and remains around $\qty{0.1}{\milli\second}$ even for $n=32$ in our experiments, which is likely due to other operations dominating the computational cost in the considered settings. This is in contrast to the computational time of the proposed method, shown in \Cref{variance processing time}, which has a clear linear trend and is orders of magnitude larger than in the probabilistic case.

One advantage of defining an ensemble's variance via \eqref{eq:secondOptimisation} is that the knowledge of dependency---or lack thereof---between components of the states $x \in \mathbb{R}^n$ can be more easily integrated. For instance, if we want to impose conditional independence between some components, i.e., to impose that $\Lambda_{i,j} = 0$ for any indices $i$ and $j \neq i$ in a given set $I$, then we can add a set of constraints in the optimisation problem \eqref{eq:secondOptimisation}. Adding constraints will reduce the precision, i.e., it will yield a Gaussian possibility function that is less informative than the less-constrained problem. This behaviour can be interpreted as follows: in the possibilistic framework, conditional independence can be obtained by sacrificing some information. This is a well-understood trade-off in data assimilation, where forcing components to be uncorrelated, a method known as \emph{localisation}, is usually accompanied by an increase of the variance of the remaining components, a process known as \emph{inflation}. The main difference between the standard inflation and the proposed approach is that inflation usually comes with additional parameters that need to be fine-tuned for different situations, whereas the required amount of precision loss is automatically determined via the optimisation problem \eqref{eq:secondOptimisation} when adding the constraints on conditional independence, with no additional tuning parameters.

Apart from defining the ensemble's expected value and variance, three steps of the algorithm need to be developed: initialisation, prediction, and update. These will be analogous to those of the probabilistic EnKF, but significant changes are required.

\subsection{Initialisation}
\label{Ini poss EnKF}

We consider a sequence of uncertain variables $\bm{x}_0, \bm{x}_1, \dots$, with $\bm{x}_k$ representing the state of the system at time step $k$. We assume that there is prior knowledge about $\bm{x}_0$, encoded into a possibility function $\overline{\mathrm{N}}(\mu_0, \Sigma_0)$. This possibility function is approximated by an ensemble $\{(w^i,\hat{x}^i_0)\}_{i=0}^N$ of $N+1$ weighted particles, defined as in \Cref{sec:ensembleApproximation}. The time index, $0$ in this case, is omitted for the weights as these will in fact remain constant throughout the algorithm, which relies exclusively on transports of the associated particles. The initialisation step is detailed in \Cref{alg: initial step of the p-EnKF}.

\begin{algorithm}
\caption{Initial step of the p-EnKF}
\label{alg: initial step of the p-EnKF}
	\renewcommand{\algorithmicrequire}{\textbf{Input:}}
	\renewcommand{\algorithmicensure}{\textbf{Output:}}
	\begin{algorithmic}[1]
		\REQUIRE The prior Gaussian possibility function $\overline{\mathrm{N}}(\mu_0, \Sigma_0)$
		\ENSURE  The initial ensemble $\{ (w^{i}, \hat{x}_{0}^i ) \}_{i=0}^N$ 
            \FOR{$i \in \{1, \ldots, N \}$}
            \STATE $\hat{x}_{0}^i \sim \mathrm{N}(\mu_0, \Sigma_0)$ \algcom{Draw a particle}
            \STATE $w^i \leftarrow \overline{\mathrm{N}}(\hat{x}_0^i;\mu_0, \Sigma_0)$ \algcom{Calculate the corresponding weight}
            \ENDFOR
            \STATE $\hat{x}_0^0 \leftarrow \mu_0$ \algcom{Add a particle deterministically at the mode}
            \STATE $w^0 \leftarrow 1$
\end{algorithmic}
\end{algorithm}

\subsection{Prediction step}
\label{Pre poss EnKF}

In the probabilistic framework, the standard way of obtaining the predictive ensemble at time step $k$ is to
\begin{enumerate*}[label=\roman*)]
    \item apply the transition model $F_k$ to each particle, and
    \item add a realisation of the transition noise.
\end{enumerate*}
The first part of this process can be used as is, with some advantageous properties: a key result is the fact that $\mathbb{E}^*(F_k(\bm{x}_{k-1})) = F_k(\mathbb{E}^*(\bm{x}_{k-1}))$ with no assumption on either $F_k$ or $\bm{x}_{k-1}$. This allows to perform the prediction without recomputing the expected value, therefore stabilising the estimation through non-linear dynamics. In practice, this means that the particle with index $0$ will always correspond to the expected value, hence its special treatment. In practice, given the ensemble $\{(w^i,\hat{x}_{k-1}^i)\}_{i=0}^N$ at time step $k-1$, we can simply compute the image $\tilde{x}^i_k = F_k(\hat{x}_{k-1}^i)$ of each particle to capture the information at time $k$ in the absence of perturbations.

However, the second part of the standard approach is not appropriate in the possibilistic setting since we aim to model epistemic uncertainty which, in this context, corresponds to deviations between the model and the actual dynamics rather than real perturbations. Deterministic methods can however be adapted and we consider using transport maps to move the points of the ensemble as in \cite{taghvaei2020optimal}. To construct such a map for our setting, we first characterize linear transformations of uncertain variables as follows.

\begin{proposition}
\label{thm: linear mapping}
If $\bm{x}$ is an uncertain variable on $\mathbb{R}^n$ described by $\overline{\mathrm{N}}(\mu, \Sigma )$ and $\bm{z} = A \bm{x}+b $ where $A$ is an $n \times n$ invertible matrix and $b$ is a constant vector of length $n$, then $\bm{z}$ is described by $\overline{\mathrm{N}}(A\mu+b, A\Sigma A^{\tr})$.
\end{proposition}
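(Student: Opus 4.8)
The plan is to prove this by a direct change-of-variables argument, using the deterministic prediction formula \eqref{change of variable equation} together with the explicit form of the Gaussian possibility function. Since $A$ is invertible, the map $x \mapsto Ax + b$ is a bijection on $\mathbb{R}^n$, so the pre-image $\{x : Ax + b = z\}$ is the singleton $\{A^{-1}(z - b)\}$ and the supremum in \eqref{change of variable equation} is trivially attained there. Hence if $\bm{z} = A\bm{x} + b$ and $\bm{x}$ is described by $\overline{\mathrm{N}}(\mu,\Sigma)$, the possibility function describing $\bm{z}$ is
\[
g(z) = \overline{\mathrm{N}}\big(A^{-1}(z-b);\mu,\Sigma\big) = \exp\Big( -\tfrac12 \big(A^{-1}(z-b) - \mu\big)^{\tr}\Sigma^{-1}\big(A^{-1}(z-b) - \mu\big)\Big).
\]

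Next I would simplify the quadratic form in the exponent. Writing $A^{-1}(z-b) - \mu = A^{-1}\big(z - b - A\mu\big) = A^{-1}\big(z - (A\mu + b)\big)$ and substituting, the exponent becomes $-\tfrac12 \big(z - (A\mu+b)\big)^{\tr} A^{-\tr}\Sigma^{-1}A^{-1}\big(z - (A\mu+b)\big)$. Recognising that $A^{-\tr}\Sigma^{-1}A^{-1} = (A\Sigma A^{\tr})^{-1}$, this is exactly $\overline{\mathrm{N}}(z; A\mu + b, A\Sigma A^{\tr})$, which is the claimed description of $\bm{z}$. One should also note that $A\Sigma A^{\tr}$ is positive definite (indeed symmetric positive semi-definite suffices for the possibility function to be well defined, as remarked earlier in the excerpt, but here positive definiteness is inherited from $\Sigma$ and the invertibility of $A$), so the resulting object is a legitimate Gaussian possibility function.

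There is essentially no hard part here: the proof is a one-line change of variables plus a matrix identity, and the only thing to be slightly careful about is that $g \geq f$-type non-uniqueness of describing possibility functions means the statement should be read as ``$\bm{z}$ is described by $\overline{\mathrm{N}}(A\mu+b, A\Sigma A^{\tr})$,'' not ``this is the unique such function''—but since the change-of-variable formula \eqref{change of variable equation} produces a specific (indeed the tightest) possibility function, and it equals the stated Gaussian, the result follows without caveat. If one wished to avoid invoking \eqref{change of variable equation} directly, an alternative is to reason at the level of events: for any $A' \subseteq \mathbb{R}^n$, the o.p.m.\ of $\bm{z}$ satisfies $\bar{P}(\bm{z} \in A') = \bar{P}(\bm{x} \in A^{-1}(A' - b)) = \sup_{x \in A^{-1}(A'-b)} \overline{\mathrm{N}}(x;\mu,\Sigma) = \sup_{z \in A'}\overline{\mathrm{N}}(A^{-1}(z-b);\mu,\Sigma)$, and then simplify as above; this makes explicit that we are pushing forward the uncertain variable rather than manipulating densities. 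Either route is routine, so I would present the first, more compact one.
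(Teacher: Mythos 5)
Your proof is correct and follows essentially the same route as the paper's: both apply the change-of-variable formula \eqref{change of variable equation}, use invertibility of $A$ to reduce the supremum to the single point $A^{-1}(z-b)$, and identify the resulting expression as $\overline{\mathrm{N}}(z; A\mu+b, A\Sigma A^{\tr})$. You merely spell out the quadratic-form manipulation that the paper leaves implicit in its final equality.
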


\begin{proof}
Let $f_{\bm{x}}$ denote the Gaussian possibility function $\overline{\mathrm{N}}(\mu, \Sigma )$. Applying \eqref{change of variable equation}, the possibility function of the uncertain variable $\bm{z}$ is
\begin{align*}
f_{\bm{z}}(z) 
={}&
\sup\{f_{\bm{x}}(x) : x \in \mathbb{R}^n, z = Ax+b \} \\
={}&
\sup\{f_{\bm{x}}(x) : x = A^{-1} (z-b) \} \\
={}&
f_{\bm{x}}(A^{-1} (z-b)) \\
={}&
\overline{\mathrm{N}} \left(A^{-1} (z-b); \mu, \Sigma \right) 
={}
\overline{\mathrm{N}} \big(z; A\mu+b, A \Sigma A^{\tr} \big).
\end{align*}
\end{proof}

According to \Cref{thm: linear mapping}, Gaussianity is preserved under linear and invertible transformations in the possibilistic framework. Furthermore, it is straightforward to obtain a map between two uncertain variables in $\mathbb{R}^n$ described by Gaussian possibility functions:

\begin{proposition}[Mapping between Gaussian possibility functions]\label{thm: Gaussian possibilistic mapping}
If $\bm{x}$ and $\bm{z}$ are two uncertain variables in $\mathbb{R}^n$ described respectively by $\overline{\mathrm{N}}(\mu, \Sigma )$ and $\overline{\mathrm{N}}(\tilde{\mu}, \tilde{\Sigma} )$, then there exists a map $M$ such that $\bm{z} = M(\bm{x})$, which is characterised by
\begin{equation*}
M(x) = \tilde{\mu} + T(x - \mu), \qquad x \in \mathbb{R}^n
\end{equation*}
where $T = \tilde{\Sigma}^{\nicefrac{1}{2}} \big(\Sigma^{\nicefrac{1}{2}} \big)^{-1}$, with the notation $A^{\nicefrac{1}{2}}$ referring to the Cholesky factor of a matrix $A$.
\end{proposition}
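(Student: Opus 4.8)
The plan is to exhibit the map $M$ explicitly in affine form and then reduce everything to a single application of \Cref{thm: linear mapping}. Write $M(x) = Tx + b$ with $T = \tilde{\Sigma}^{\nicefrac{1}{2}}\big(\Sigma^{\nicefrac{1}{2}}\big)^{-1}$ and $b = \tilde{\mu} - T\mu$, so that $M$ is exactly the map in the statement. Assuming (as throughout) that $\Sigma$ and $\tilde{\Sigma}$ are positive definite, their Cholesky factors $\Sigma^{\nicefrac{1}{2}}$ and $\tilde{\Sigma}^{\nicefrac{1}{2}}$ are nonsingular, hence so is their product $T$; thus $M$ is an invertible affine map and \Cref{thm: linear mapping} applies with $A = T$.

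Applying \Cref{thm: linear mapping} to $\bm{x} \sim \overline{\mathrm{N}}(\mu,\Sigma)$ gives that $M(\bm{x}) = T\bm{x} + b$ is described by $\overline{\mathrm{N}}\big(T\mu + b,\, T\Sigma T^{\tr}\big)$. It then remains only to check the two parameter identities. The mean identity is immediate, $T\mu + b = T\mu + \tilde{\mu} - T\mu = \tilde{\mu}$. For the covariance, expanding $\Sigma = \Sigma^{\nicefrac{1}{2}}\big(\Sigma^{\nicefrac{1}{2}}\big)^{\tr}$ and using that $\big(\Sigma^{\nicefrac{1}{2}}\big)^{-1}\Sigma^{\nicefrac{1}{2}} = I$ and $\big(\Sigma^{\nicefrac{1}{2}}\big)^{\tr}\big(\Sigma^{\nicefrac{1}{2}}\big)^{-\tr} = I$,
\[
T\Sigma T^{\tr} = \tilde{\Sigma}^{\nicefrac{1}{2}}\big(\Sigma^{\nicefrac{1}{2}}\big)^{-1}\Sigma^{\nicefrac{1}{2}}\big(\Sigma^{\nicefrac{1}{2}}\big)^{\tr}\big(\Sigma^{\nicefrac{1}{2}}\big)^{-\tr}\big(\tilde{\Sigma}^{\nicefrac{1}{2}}\big)^{\tr} = \tilde{\Sigma}^{\nicefrac{1}{2}}\big(\tilde{\Sigma}^{\nicefrac{1}{2}}\big)^{\tr} = \tilde{\Sigma},
\]
so $M(\bm{x})$ is described by $\overline{\mathrm{N}}(\tilde{\mu},\tilde{\Sigma})$, the same possibility function as $\bm{z}$.

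The only point that needs a word of care, and which I expect to be the sole conceptual obstacle rather than the routine algebra, is the meaning of the identity $\bm{z} = M(\bm{x})$: as recalled in \Cref{subsec: Possibilistic Framework}, an uncertain variable is not characterised by the possibility function describing it, so what the computation above actually establishes is that $M$ transports the description of $\bm{x}$ onto that of $\bm{z}$, making $M(\bm{x})$ and $\bm{z}$ interchangeable as carriers of the available information — which is exactly the sense in which the equality is to be read, and all that is required for the prediction step. It is worth noting in passing that such an $M$ is not unique (composing $M$ with any linear map fixing $\overline{\mathrm{N}}(0,\tilde{\Sigma})$, for instance an orthogonal map acting in whitened coordinates, yields another valid transport), and the Cholesky-based choice of $T$ is merely a convenient canonical representative, mirroring the square-root construction used in \Cref{alg: update step of SqrtEnKF}.
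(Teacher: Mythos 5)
Your proof is correct and follows essentially the same route as the paper's: write $M$ in affine form, invoke \Cref{thm: linear mapping}, and verify that the transformed parameters reduce to $\tilde{\mu}$ and $\tilde{\Sigma}$. Your closing remarks on the interpretation of $\bm{z} = M(\bm{x})$ and the non-uniqueness of the transport are accurate additions but not part of the paper's argument, which stops once the descriptions are shown to coincide.
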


\begin{proof}
    First, we rewrite the mapping as follows
\begin{equation*}
M(\bm{x}) ={} \tilde{\mu} + T(\bm{x} - \mu)  ={} T\bm{x} +  \left(\tilde{\mu}   - T\mu \right)
\end{equation*}
As this is a linear and invertible transformation of $\bm{x}$, \Cref{thm: linear mapping} guarantees that $\bm{z} = M(\bm{x})$ is also described by a Gaussian possibility function, defined as
\begin{align*}
f_{\bm{z}}(z) & = \overline{\mathrm{N}}\Big( T\mu +  (\tilde{\mu}   - T\mu ), T\Sigma T^{\tr} \Big) \\
& = \overline{\mathrm{N}}\Big( \tilde{\mu}, \tilde{\Sigma}^{\nicefrac{1}{2}} \big(\Sigma^{\nicefrac{1}{2}} \big)^{-1}  \Big[\Sigma^{\nicefrac{1}{2}} \big(\Sigma^{\nicefrac{1}{2}} \big)^{\tr}  \Big] \big(\Sigma^{\nicefrac{1}{2}} \big)^{-\tr} \big( \tilde{\Sigma}^{\nicefrac{1}{2}} \big)^{\tr} \Big)
 = \overline{\mathrm{N}}\big( \tilde{\mu}, \tilde{\Sigma}\big).
\end{align*}
\end{proof}

Such a transport map can be used in the prediction step of the p-EnKF to add uncertainty from the transition model to the ensemble using deterministic mapping as follows: We fit a Gaussian possibility function $\overline{\mathrm{N}}(\mu_k, \tilde{\Sigma}_k)$ to the ensemble $\{(w^i,\tilde{x}_k^i)\}_{i=0}^N$, then, from the prediction of the possibilistic Kalman filter \cite{houssineau2018smoothing}, we know that an additive Gaussian uncertainty of the form $\overline{\mathrm{N}}(0_n, U_k)$ will yield a Gaussian predictive possibility function with expected value $\mu_k$ and variance $\tilde{\Sigma}_k + U_k$. Using \Cref{thm: Gaussian possibilistic mapping}, we can compute a map $M_k$ from $\overline{\mathrm{N}}(\mu_k, \tilde{\Sigma}_k)$ to $\overline{\mathrm{N}}(\mu_k, \tilde{\Sigma}_k + U_k)$ and apply it to each particle $\tilde{x}^i_k$ to obtain the predicted ensemble at time $k$. The prediction step of the p-EnKF is summarised in \Cref{alg: prediction step of the p-EnKF}. Although we assume the Gaussianity of the ensemble at a slightly earlier stage than the standard EnKF, $U_k$ is typically small compared to $\tilde{\Sigma}_k$ so that the impact of this assumption is expected to be small. In addition, fitting a Gaussian possibility function to the predicted ensemble is unnecessary as the result is known to be $\overline{\mathrm{N}}(\mu_k, \Sigma_k)$, with $\Sigma_k = \tilde{\Sigma}_k + U_k$, based on \Cref{prop:bestFitUnderTransform}. 

\begin{algorithm}
\caption{Prediction step of the p-EnKF at time $k$}
\label{alg: prediction step of the p-EnKF}
	\renewcommand{\algorithmicrequire}{\textbf{Input:}}
	\renewcommand{\algorithmicensure}{\textbf{Output:}}
	\begin{algorithmic}[1]
		\REQUIRE The time $k-1$ posterior ensemble, $\{(w^i,\hat{x}_{k-1}^i)\}_{i=0}^N$.
		\ENSURE  The time $k$ predictive ensemble, $\{(w^i,x_{k}^i)\}_{i=0}^N$, expected value $\mu_k$, and variance $\Sigma_{k}$\\
        \FOR{$i \in \{1, \ldots, N+1 \}$}
            \STATE $\tilde{x}_{k}^i \leftarrow F_k(\hat{x}_{k-1}^i)$ \algcom{Apply dynamics to particles}
        \ENDFOR
        \STATE $\mu_{k} \leftarrow \tilde{x}_{k}^0$
        \STATE $\tilde{\Lambda}_{k} \leftarrow \Lambda^*\big( \{(w^i,\hat{x}_{k-1}^i)\}_i \big)$ \algcom{Compute the precision matrix}
        \STATE $\tilde{\Sigma}_{k} \leftarrow \tilde{\Lambda}_{k}^{-1}$
        \FOR{$i \in \{1, \ldots, N\}$}
            \STATE $T \leftarrow \big(\tilde{\Sigma}_{k} + U_k\big)^{\nicefrac{1}{2}} \tilde{\Lambda}_{k}^{\nicefrac{1}{2}}$ \algcom{Compute matrix for adding uncertainty}
            \STATE $x_{k}^i \leftarrow \mu_k + T(\tilde{x}_{k}^i - \mu_{k})$ \algcom{Transport each particle}
        \ENDFOR
        \STATE $\Sigma_k \leftarrow \tilde{\Sigma}_{k} + U_k$
\end{algorithmic}
\end{algorithm}

\subsection{Update step}
\label{Up poss EnKF}

We first need to specify how the observation equation \eqref{eq:SSMobservation} will be adapted to the considered context. Since perturbations in sensors are often stochastic in nature, we continue to model the error in the observation with a random variable, that is
\[
Y_k = H_k(\bm{x}_k) + \varepsilon_k,
\]
with $\varepsilon_k \sim \mathrm{N}(0,V_k)$ as before. The mechanism to update the information on $\bm{x}_k$ accordingly is provided by \eqref{eq:updateByRandomObs}. In some situations, the main source of uncertainty in the observation will be of an epistemic nature, yet, if the corresponding model errors are described by $\overline{\mathrm{N}}(0,V_k)$, then the posterior possibility function will be the same; this follows from the likelihood principle since the two associated likelihoods will only differ by a multiplicative constant.

There are several variants of the probabilistic EnKF update step. Here, we follow the principles of the SqrtEnKF as it is well suited to the non-random setting of interest. In fact, we will show that our ensemble can be updated exactly in the same way as in the SqrtEnKF.

From the prediction step, we know that the best fitting Gaussian possibility function for the predicted sample is $\overline{\mathrm{N}}(\mu_k, \Sigma_k)$. As is standard, we consider the deviations $e_k^i = x_k^i - \mu_k$ and we verify that the correct updating formulas for the expected value and deviations are
\begin{align*}
\hat{\mu}_k =& \mu_k + K_k(y_k - H_k\mu_k), &
\textrm{ and }\hat{e}_k^i =& e_k^i - \tilde{K}_k H_ke_k^i,
\end{align*}
where $K_k$ and $\tilde{K}_k$ are as defined in the Kalman filter and SqrtEnKF, respectively. The updated particles obtained by adding the posterior expected value and deviations together are
$$
\hat{x}_k^i = \hat{M}_k(x_k^i) \doteq (I_n - \tilde{K}_kH_k) x_k^i + (\tilde{K}_kH_k\mu_k + K_ky_k-K_kH_k\mu_k),
$$ 
which is a linear transformation of $x_{k}^i$. Defining $\bm{x}_k$ as an uncertain variable described by $\overline{\mathrm{N}}(\mu_k, \Sigma_k)$ and using \Cref{thm: Gaussian possibilistic mapping}, it follows that $\hat{M}_k(\bm{x}_k)$ is described by
\begin{align*}
\hat{f}_k & = \overline{\mathrm{N}} \left((I_n - \tilde{K}_kH_k)\mu_k  + (\tilde{K}_kH_k\mu_k + K_ky_k-K_kH_k\mu_k) , (I_n - \tilde{K}_kH_k)\Sigma_k (I_n - \tilde{K}_kH_k)^{\tr}   \right) \\
& = \overline{\mathrm{N}} \left(\mu_k  + K_k(y_k - H_k\mu_k), (I_n - K_kH_k)\Sigma_k \right),
\end{align*}
which matches the update mechanism of the KF, as required. The map $\hat{M}_k$ is therefore moving the particles in such a way that the best fitting Gaussian for $\{( w^i, \hat{M}_k(x^i_k))\}$ is the posterior of the KF with $\overline{\mathrm{N}}(\mu_k, \Sigma_k)$ as a predicted possibility function. Thus, the update step of the p-EnKF is formally the same as that of the standard SqrtEnKF, as detailed in \Cref{alg: update step of SqrtEnKF}, albeit with a difference in interpretation.

\subsection{Extensions}
\label{sec:extensions}

We finish this section by collecting together some extensions to the p-EnKF, demonstrating its applicability in the context of nonlinear measurement models and providing a number of ways to improve computational efficiency.

\subsubsection{Nonlinear observation models} \label{Nonlinear Observation Model}
In the previous section, we have detailed the p-EnKF with a linear observation model. We now establish that, similarly to the EnKF \cite{frei2013ensemble,tang2014nonlinear}, the p-EnKF could be adapted to nonlinear observation models. There are, in fact, two main approaches to dealing with nonlinearity in the observation model, which we consider in turn.

\paragraph{Model linearisation} The simplest way to handle a nonlinear observation model is to linearise it, i.e., to Taylor expand the observation model around the predictive expected value at time $k$ as $H_k(x_{k}) \approx H_k(\mu_{k}) + J_{H_k}(\mu_{k}) (x_{k}-\mu_{k})$, with $J_{H_k}(\mu_{k})$ the Jacobian matrix of $H_k$ at $\mu_k$. Then, a new observation model can be defined based on the observation matrix $J_{H_k}(\mu_{k})$ and on a non-zero mean $H_k(\mu_{k}) - J_{H_k}(\mu_{k})\mu_{k}$ for the observation noise $\varepsilon_k$. After that, we can follow \Cref{alg: update step of SqrtEnKF} for the update, except that the term $y_k - H_k \mu_k$ is replaced by $y_k + J_{H_k}(\mu_{k})\mu_{k}$ in step $4$. Since linearisation does not depend on the chosen representation of uncertainty, it is equally applicable to the p-EnKF as it is to standard versions of the EnKF. The linearisation method can usually be improved by replacing the term $H_k \Sigma_k H_k^{\tr}$ by the predictive variance of the observation based on the ensemble \cite{tang2014nonlinear}, which is also applicable to the p-EnKF. The term $\Sigma_k H_k^{\tr}$ can usually also be replaced by a ensemble-based approximation, yet, this is not straightforward in the p-EnKF since the precision matrix does not have the same properties as the covariance matrix: to compute the covariance matrix $\Sigma_{\bm{x},H_k(\bm{x})}$ between the uncertain variables $\bm{x}$ and $H_k(\bm{x})$, one must first compute the precision matrix for $(
\bm{x}, H_k(\bm{x}))$, invert it, and then extract the block corresponding to $\Sigma_{\bm{x},H_k(\bm{x})}$. Such an extension of the state is however usual, as described next.

\paragraph{Extending the state space} Another way to deal with a nonlinear observation model is by extending/augmenting the original state with a the corresponding predicted observation, which is then observed linearly. In this case, the state becomes $\bm{z}_k = (\bm{x}_k, H_k(\bm{x}_k) )$ and the extended observation matrix is $\tilde{H}_k = \begin{bmatrix} 0_{m \times n} & I_m \end{bmatrix}$, with $0_{m \times n}$ the $0$ matrix of size $m \times n$. \Cref{alg: update step of SqrtEnKF} can be used as usual once a Gaussian is fitted to this extended state. The posterior ensemble can then be extracted by choosing the first $n$ elements of the extended state for every particle. Care must be taken in practice as the precision matrix of the extended state can be close to singularity due to a strong correlation between the elements. For instance, if one of the components of the state that is observed independently becomes sufficiently well estimated at a given time step, then it might be that the nonlinear observation function is approximately linear from the viewpoint of the ensemble. Yet, this corresponds to cases where linearisation would be appropriate. It therefore appears that a hybrid technique would be the most suitable, with components of the observations being either linearised or included in the state depending on their observed degree of nonlinearity.

\subsubsection{Techniques to improve computational efficiency} \label{Technique to deal with small ensembles}
As with any ensemble-based technique, the accuracy of the p-EnKF depends on ensemble size. As it is typically of interest to use small ensembles for computational reasons, it can be challenging to represent the state of interest adequately. An important aspect is that the p-EnKF requires a minimum sample size equal to the state's dimension plus one due to the computation of the variance; a sample size matching the dimension plus one is sufficient to ensure that the resulting covariance matrix is of full rank providing only that the collection of displacements from the expected value to each of the sample points are linearly independent whereas a smaller sample will lead to a rank-deficient covariance matrix. That a sample size equal to the dimension plus one suffices follows from the fact that a quadratic form bounded away from zero at a number of points separated from the centroid by linearly-independent vectors cannot vanish anywhere. However, the required sample size can be reduced by using one technique in the following section. Despite this constraint on the sample size, some methods can be considered for improving the computational efficiency, we present two of them in what follows.

\paragraph{Conditional Independence} \label{Conditional Independence}
For the p-EnKF, variance is computed via an optimisation problem. Thus, the number of variables in the precision matrix of the state $x \in \mathbb{R}^n$ will be $n(n+1)/2$. However, for many problems, there is a natural structure in the state variable, such as a conditional-independence structure, which can be exploited to reduce this number. Indeed, for sparsely dependent models, it is straightforward to reduce the number of nonzero variables in the precision matrix $\Lambda$ using conditional independence as follows:     The off-diagonal elements $\Lambda_{ij}$ that $i \neq j$ are set to $0$ if the variable $\bm{x}_i$ and $\bm{x}_j$ are to be modelled as conditionally independent. By setting some of the off-diagonal elements in the precision matrix to $0$ during the computation of this matrix by optimisation, the other terms in the precision matrix will automatically adjust to these constraints, offering a systemic way to perform inflation that is tailored to the strength of the dependence being assumed away. However, exploiting this structure to gain in computational efficiency would require optimisation algorithms that are specifically tailored to sparse/band diagonal positive-definite matrices. The design of such algorithms is left for future work.

\paragraph{Nodal Numbering Scheme} \label{Nodal Numbering Scheme}
Apart from reducing the number of variables, the calculation and storage can be done more efficiently by reordering the variables to minimise the bandwidth of nonzero entries in the covariance matrix. One of the methods to achieve that is proposed in the literature is called the nodal numbering scheme \cite{cuthill1969reducing}. This method uses the graph to reorder the state's element so that the nonzero elements will be close to the diagonal element. Moreover, the nodal numbering scheme ensures that the permuted matrix will have a bandwidth no greater than the original matrix, making the computation involved with the precision matrix more efficient, see \cite{cuthill1969reducing} and \cite{FlukeThesis} for more details.

\section{Numerical Experiments}
\label{sec: Results}

Here we show the performance of the p-EnKF using simulated data from two different models: a simple linear model and a modified Lorenz 96 model.

\paragraph{Data Generation} One convenient feature of standard probabilistic modelling is that simulation can be performed exactly according to the model: the variability of scenarios, necessary for a thorough performance assessment, can be obtained directly by sampling from the assumed probability distributions. This is no longer the case with possibility theory since embracing epistemic uncertainty means that sampling is no longer a natural operation. The ideal solution would be to obtain a sufficiently large collection of real datasets for which the ground truth is known, this is however not generally achievable for problems like data assimilation. Instead, we generate our simulated scenarios by sampling from the probability distributions assumed by the probabilistic baselines and align our possibilistic model with these.

\subsection{Linear model} \label{Linear model}
We first consider a linear model since the performance can be clearly compared with the optimal filter, which can be computed in this instance via the KF. A simple linear model is considered so as to generalise easily to arbitrary dimensions. In particular, we consider that the $i$-th component at time $k$, $i \in \{2,\dots,n\}$, only depends on the $i$-th and $(i-1)$-th components at time $k-1$. This means that conditional independence can be imposed to reduce the computational cost of obtaining the precision matrix with a limited information loss. The model can be written as a state-space model \eqref{Gaussian State Space equations}, for $k \in \{1,\dots,100\}$, with the following components:
\begin{enumerate}
\item The initial state $X_0$ is sampled from $\mathrm{N}(0_n, 10I_n)$.
\item The dynamic model is linear: $F_k(X_{k-1}) = F_k X_{k-1}$ with
\[
F_k = \begin{bmatrix}1 &\lambda & 0  & \cdots & 0 \\
0 & 1 &\lambda &  \cdots & 0\\
0 & 0 & 1 &  \cdots & 0\\
\vdots & \vdots & \vdots & \ddots & \vdots \\
0 & 0 & 0 &  \cdots & 1\\
\end{bmatrix},
\]
where $\lambda =0.1$ and the covariance matrix of the dynamical noise $\epsilon_k$ is $U_k = 0.01I_n$.
\item The observation model is also linear, $H_k(X_k) = H_k X_k$, with $H_k = \begin{bmatrix}
I_m  & 0_{m \times (n-m)} 
\end{bmatrix}$ and the covariance matrix of the observation noise $\varepsilon_k$ is $V_k = 0.1I_m$.
\end{enumerate}

Based on all the shared properties between the Gaussian possibility function and the Gaussian distribution, the best way to align our possibilistic model with the assumed probabilistic one is to simply keep the expected value and variance parameters in our possibility functions. In particular, we assume that the initial state $\bm{x}_0$ is described by the Gaussian possibility function $\overline{\mathrm{N}}(0_n, 10I_n)$ and that the errors in the dynamical model are described by $\overline{\mathrm{N}}(0_n, U_k)$.

The parameters of the considered methods are as follows: Unless otherwise stated, the parameter $N$ is set to twice the state's dimension, i.e., $N = 2n$. For a given value of $N$, the actual number of samples for all methods is $N+1$. The parameters of the UKF chosen in this paper are given by $\alpha = 0.25, \kappa =130, \lambda = \alpha^2(n+\kappa) - n $ and $\beta = 2$.

\subsubsection{Performance assessment for various sample sizes and dimensions}
\label{Performance based on different sample sizes and dimensions linear} Since the p-EnKF is an ensemble-based method, it is natural for us to investigate the performance based on different sample sizes and dimensions first. \Cref{linear dimsam} shows the performance of the p-EnKF with no localisation when the state is fully observed ($n=m$), comparing it to the SqrtEnKF and the UKF. The performance is measured in terms of average RMSE over $1000$ realisations, except for $n=64$ where we only consider 50 realisations due to a large computational time (more than 30 minutes per run). We aim to assess the performance after initialisation and thus focus on the estimation of the state $X_{100}$ at the last time step. The RMSE is computed \begin{enumerate*}[label=\roman*)]
    \item for the posterior expected value with respect to (w.r.t.)\ the posterior mean of the KF,
    \item for the posterior expected value w.r.t.\ the true state, and
    \item for the posterior variance w.r.t.\ the posterior variance of the KF.
\end{enumerate*}
The key aspects in \Cref{linear dimsam} are as follows:
\begin{enumerate}
    \item As can be seen in \Cref{linear dimsam errmean}, despite the similarities between the p-EnKF and the SqrtEnKF, the latter improves on the former by at least 4 orders of magnitude in terms of RMSE w.r.t.\ the posterior mean of the KF. Although the SqrtEnKF could be used when $N < n$, which is key in large dimensions, the performance would necessary be lower in this regime. The difference in performance is still visible when considering the RMSE w.r.t.\ the true state, as in \Cref{linear dimsam errstate}, however it is less pronounced due to the unavoidable error caused by the distance between the true state and the optimal estimator given by the mean of the KF.
    \item Despite the fact that the capabilities in terms of variance recovery are almost indistinguishable in \Cref{variance recovery} between the possibilistic and probabilistic approach, \Cref{linear dimsam errvar} shows that the p-EnKF once again largely outperforms the SqrtEnKF in terms of RMSE w.r.t.\ the optimal variance given by the KF, with improvements by at least 4 orders of magnitude throughout once again. This is due to the fact that here, as opposed to \Cref{variance recovery}, the mean of the ensemble also needs to be estimated by the SqrtEnKF whereas the expected value for the p-EnkF is given by the particle with index $0$ and thus does not need to be re-estimated.
\end{enumerate}

The estimates of the UKF are closer to optimality than the ones of the p-EnKF in \Cref{linear dimsam errmean,linear dimsam errvar}; this could be due to the difference in initialisation between the two algorithms, with the UKF placing points deterministically and with the p-EnKF relying on a random sample at the first time step. Yet, this source of randomness in the p-EnKF is not necessary and other initialisation schemes are considered in the next section.

\begin{figure}[htpb]
\centering
\subfloat[Average RMSE w.r.t.\ the posterior mean of the KF.]{\label{linear dimsam errmean}\includegraphics[width=1\textwidth]{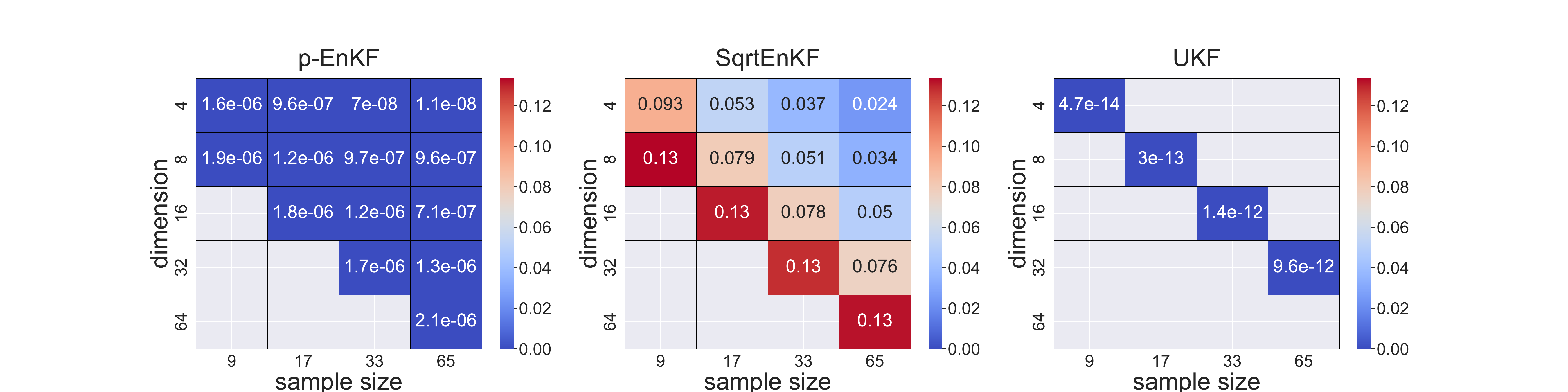}}\\
\subfloat[Average RMSE w.r.t.\ the true state.]{\label{linear dimsam errstate}\includegraphics[width=1\textwidth]{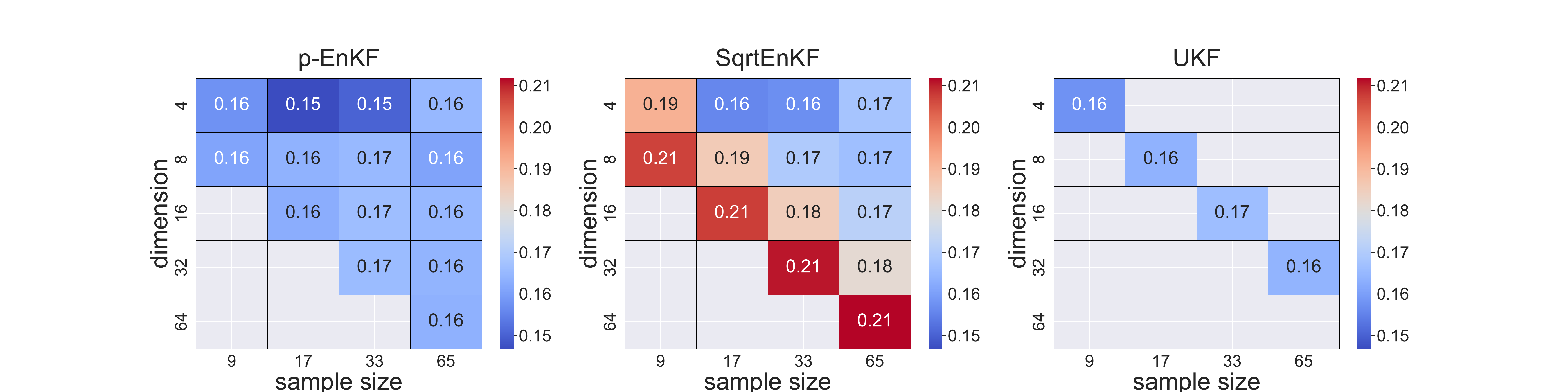}}\\
\subfloat[Average RMSE w.r.t.\ the posterior variance of the KF.]{\label{linear dimsam errvar}\includegraphics[width=1\textwidth]{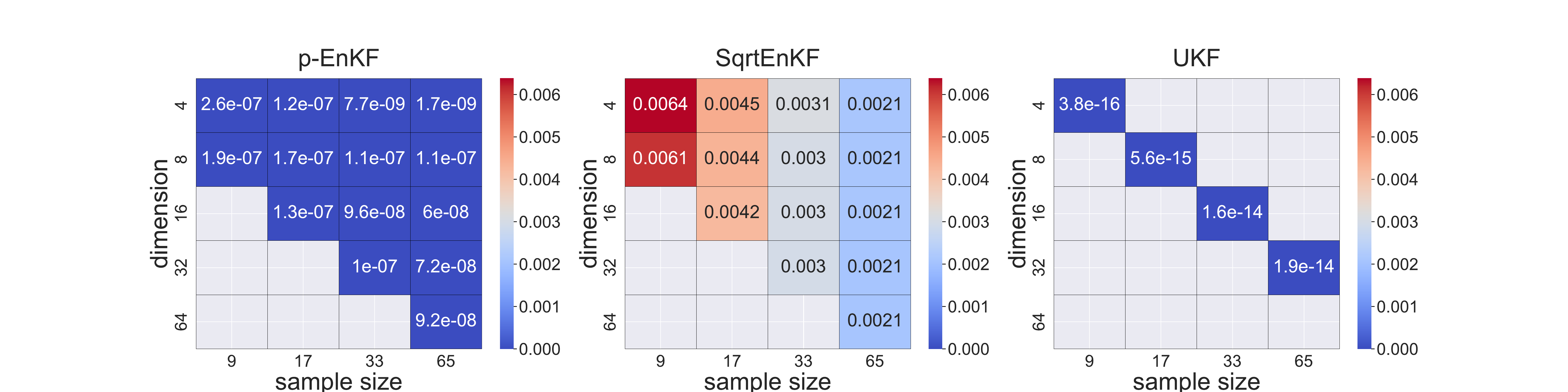}}
\caption{Performance assessment for the fully-observed linear model.}
\label{linear dimsam}
\end{figure}

\subsubsection{Alternatives schemes for particle initialisation} \label{Performance from different algorithms linear}

We now investigate the performance of the p-EnKF with different initialisation schemes inspired by the UKF. When $N = 2n$, we can simply consider exactly the $\sigma$-points of the UKF as initial point for the p-EnKF and we refer to this scheme as ``UKF initialisation''. To allow for setting $N = n$, we also arbitrarily consider only the $\sigma$-points which corresponds to increasing (resp.\ decreasing) one of the components of the mean vector and we refer to this scheme as ``UKF initialisation $+$'' (resp.\ ``UKF initialisation $-$'').

To better highlight the dependency on the initialisation scheme, we consider in this section a partially observed model with $m=1$, i.e., only the first component of the state is observed. This is challenging in general, so only problems of small state dimension are considered. In \Cref{different algorithms with linear dim3and5}, the performance assessment is carried out for $n=3$ and $n=5$, with all RMSEs being averaged over $1000$ repeats. In \Cref{linear dim3 error,linear dim5 error}, all the considered initialisation schemes are compared against the SqrtEnKF and the StEnKF with $2n+1$ samples, with the poor performance of the latter highlighting the difficulty of these inference problems despite the small dimension. There is a slight but consistent improvement in performance when using the UKF initialisation schemes in the p-EnKF, although this improvement vanishes after 30 to 50 time steps, depending on the state dimension. \Cref{linear dim3 range,linear dim5 range} are restricted to the StEnkKF, StEnKF and p-EnKF with UKF initialisation for the sake of legibility, and show that although the interquartile range of the different methods overlap, the p-EnKF performs consistently well across different realisations.

\begin{figure}[htpb]
\centering
\subfloat[Average RMSE w.r.t.\ the true state for $n = 3$]{\label{linear dim3 error}\includegraphics[width=.44\textwidth]{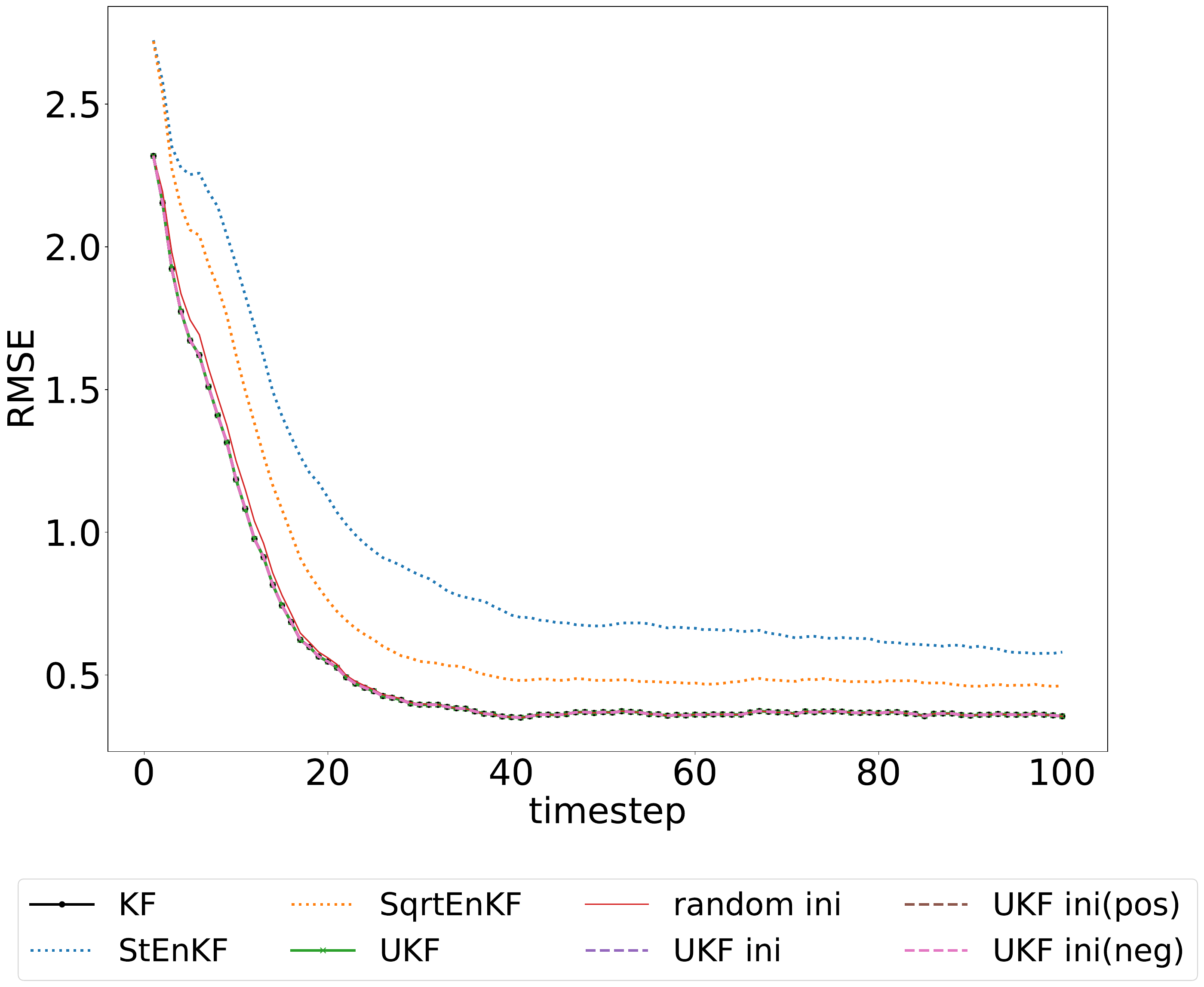}}
\hfill
\subfloat[Average RMSE w.r.t.\ the true state for $n = 5$]{\label{linear dim5 error}\includegraphics[width=.44\textwidth]{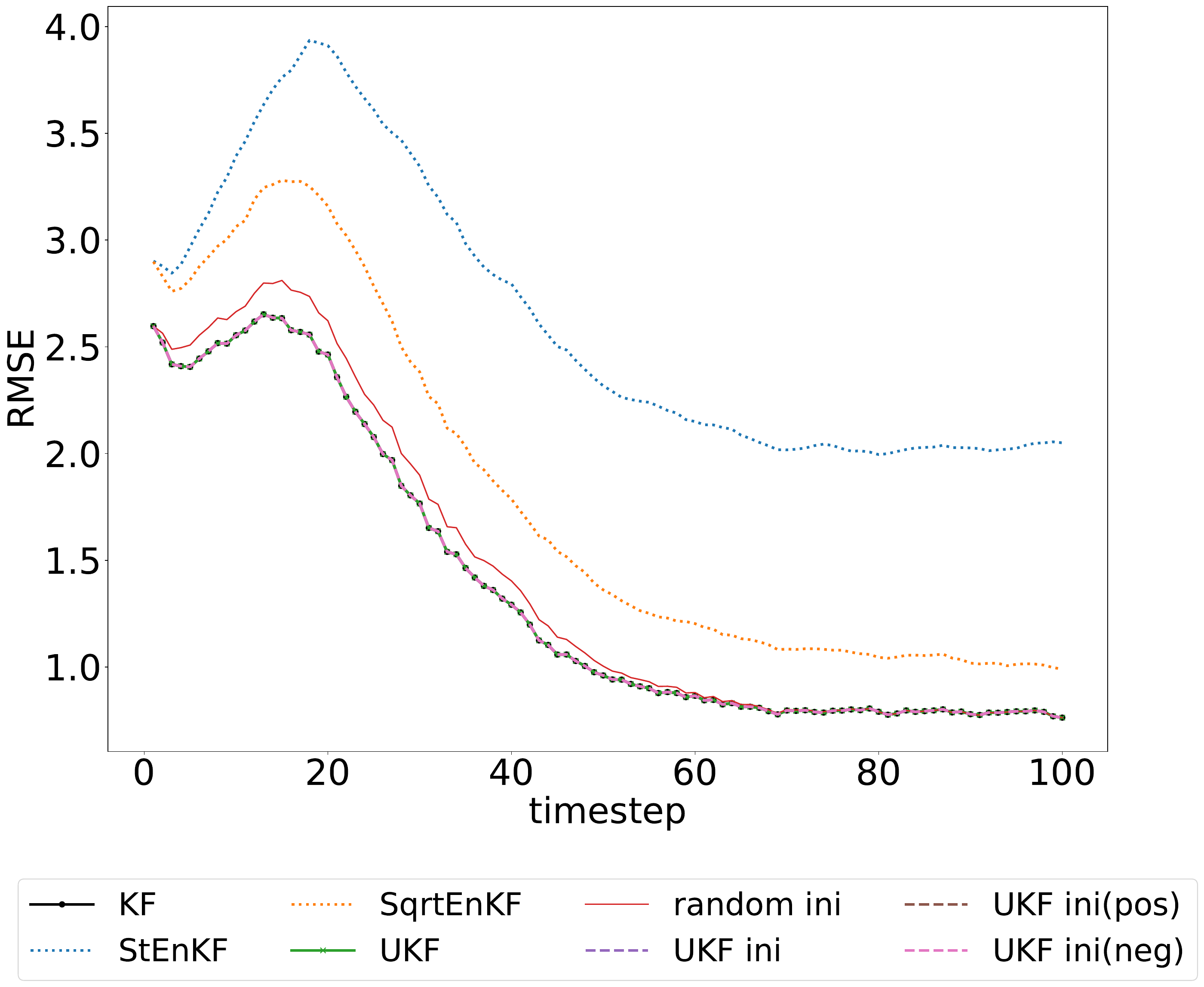}} \\
\subfloat[Average RMSE with the interquartile range for $n = 3$]{\label{linear dim3 range}\includegraphics[width=.44\textwidth]{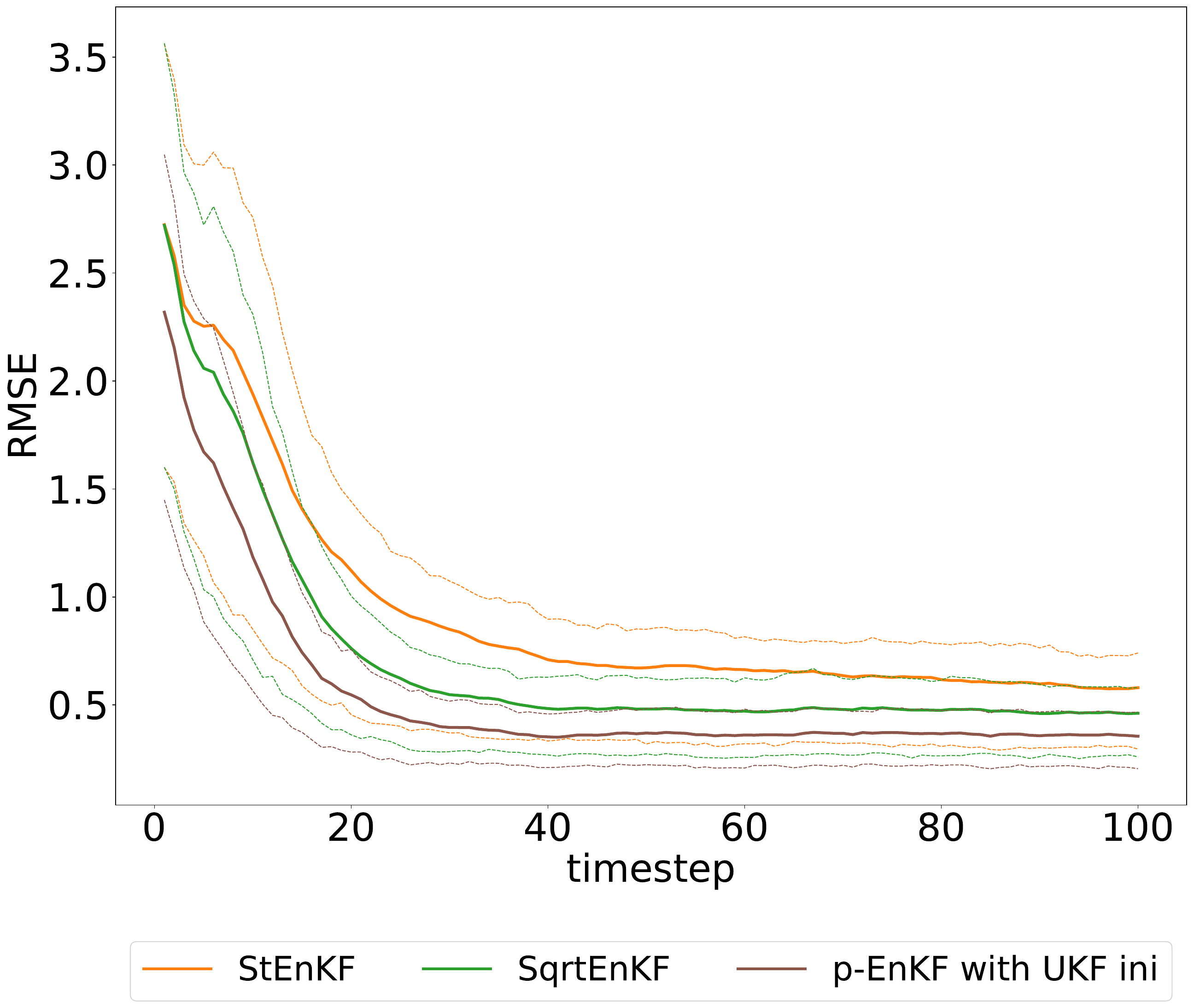}}
\hfill
\subfloat[Average RMSE with the interquartile range for $n = 5$]{\label{linear dim5 range}\includegraphics[width=.44\textwidth]{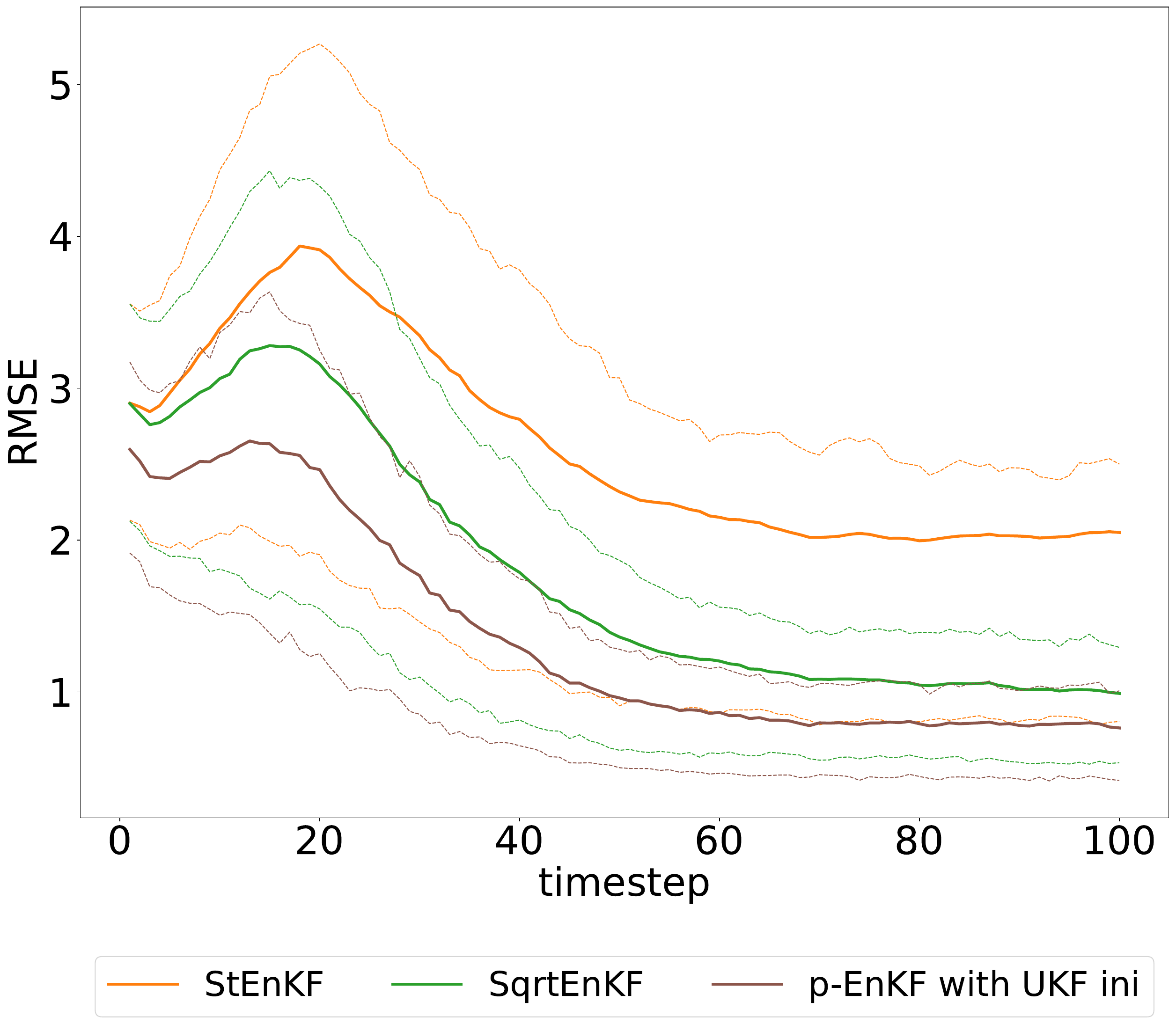}}
\caption{Average RMSE and the error range of different algorithms in the linear model. Left: state's dimension of $3$; Right: state's dimension of $5$, averaged over $1000$ repeats.}
\label{different algorithms with linear dim3and5}
\end{figure}

\subsubsection{Banded vs.\ full precision matrix} \label{Banded VS full precision matrix linear}

To illustrate the capabilities of the p-EnKF to deal with localisation via a systematic form of inflation, we contrast its performance with a full precision matrix against the one with a banded precision matrix with a bandwidth of two, i.e., where all elements except the diagonal and elements adjacent to it are set to zero. We compare these results against the ones obtained with the standard versions of the EnKF for the fully-observed ($n=m=5$) and partially-observed ($n=5,m=1$) cases. The performance is measured with three quantities \begin{enumerate*}[label=\roman*)]
    \item the RMSE of the posterior expected value w.r.t.\ the true state,
    \item the determinant of the posterior variance, and
    \item the Mahalanobis distance between the posterior expected value and the true state.
\end{enumerate*}
The Mahalanobis distance is used to capture how good the estimate is relative to its variance and, hence, assesses the calibration of the algorithms in terms of uncertainty quantification. It is defined as $\sqrt{(x-\mu)^{\tr}\Sigma^{-1}(x-\mu)}$ where $x$ is the true state, $\mu$ is a given posterior expected value, and $\Sigma$ is the posterior variance.

\begin{figure}[htpb]
\centering
\subfloat[Average RMSE w.r.t.\ the true state]{\label{error fullob linear}\includegraphics[width=.44\textwidth]{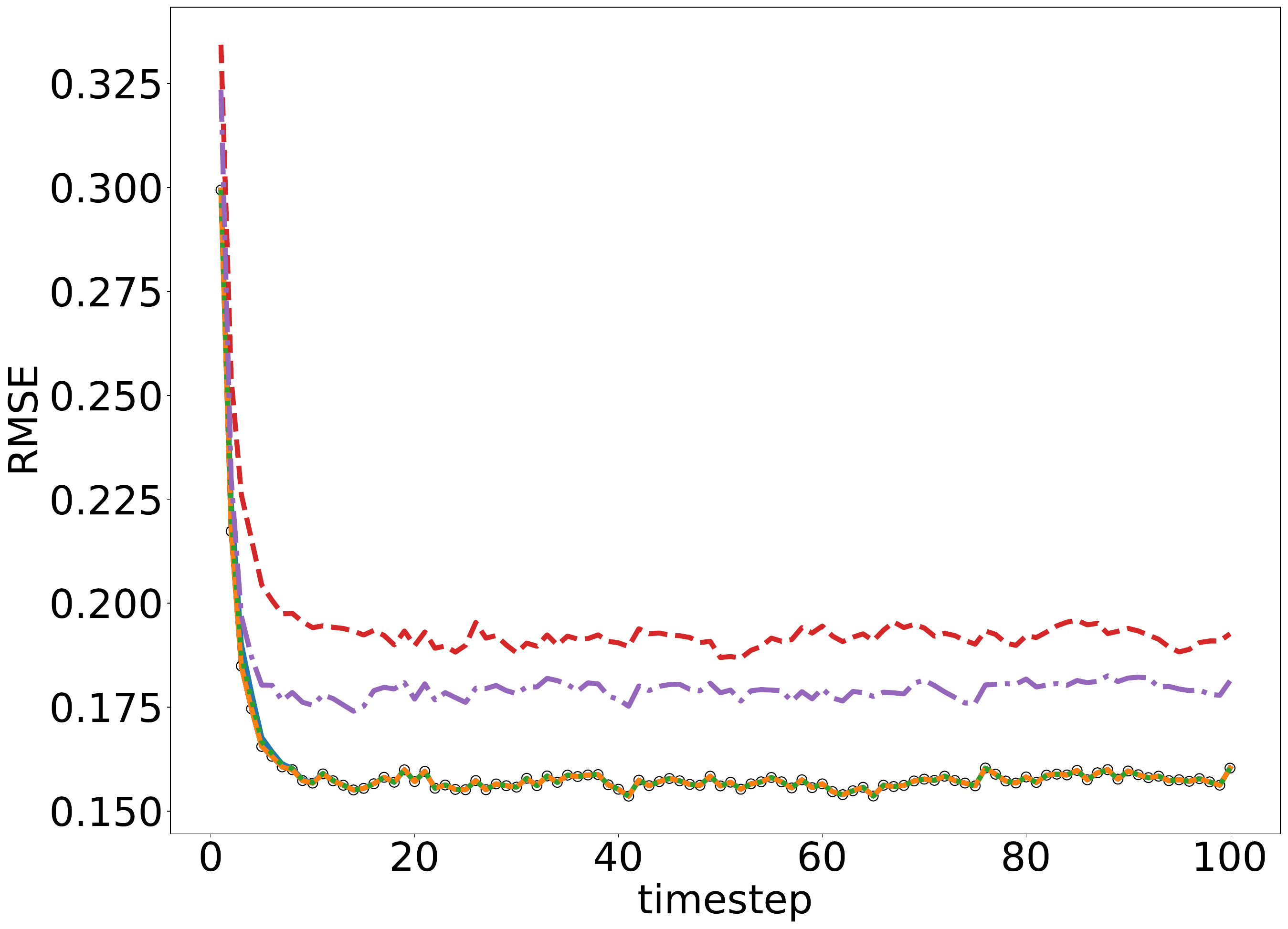}}
\qquad
\subfloat[Average RMSE w.r.t.\ the true state]{\label{error partob linear}\includegraphics[width=.44\textwidth]{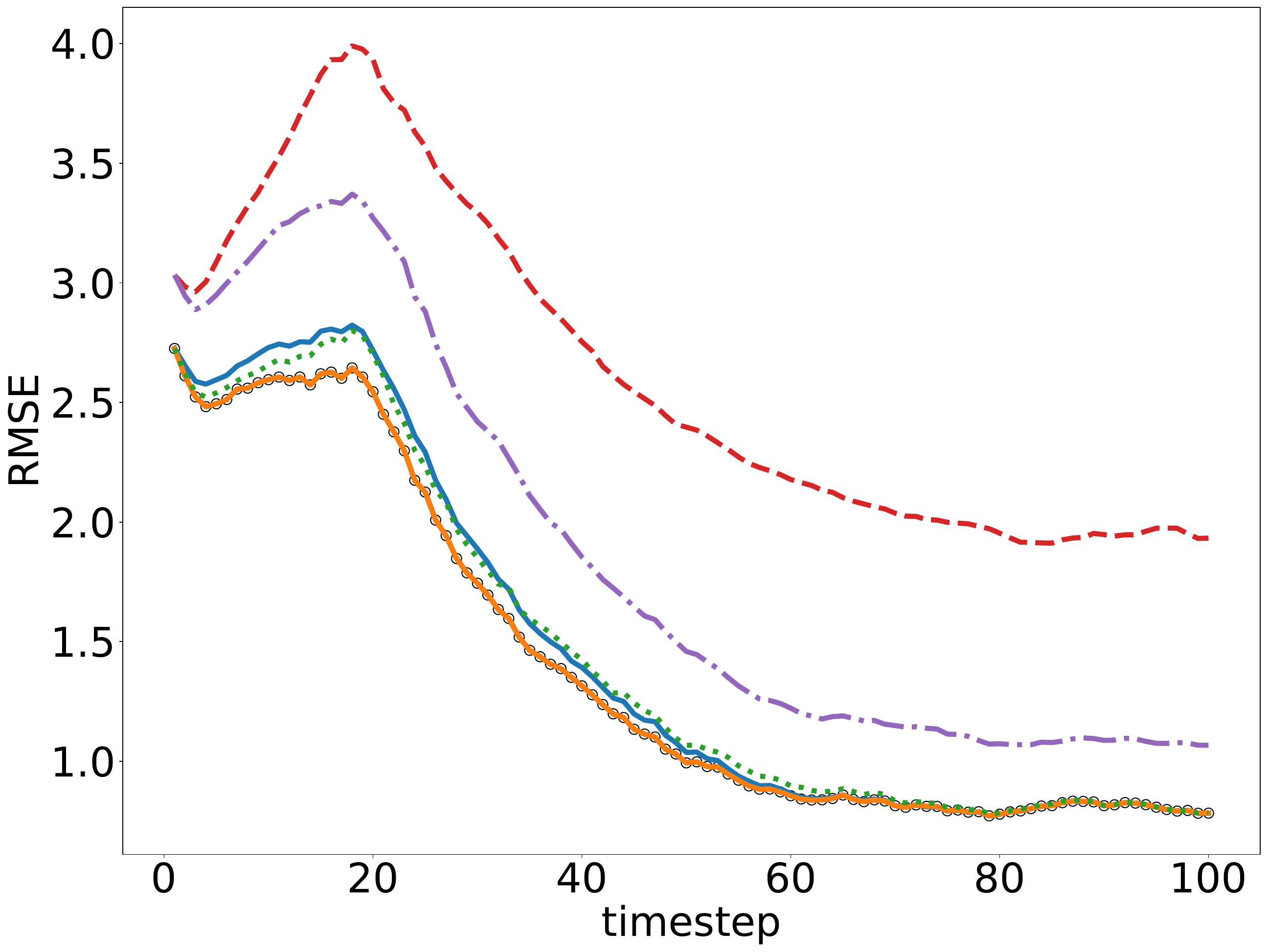}} \\
\subfloat[Estimated $\log(\det(\textrm{variance}))$]{\label{var fullob linear}\includegraphics[width=.44\textwidth]{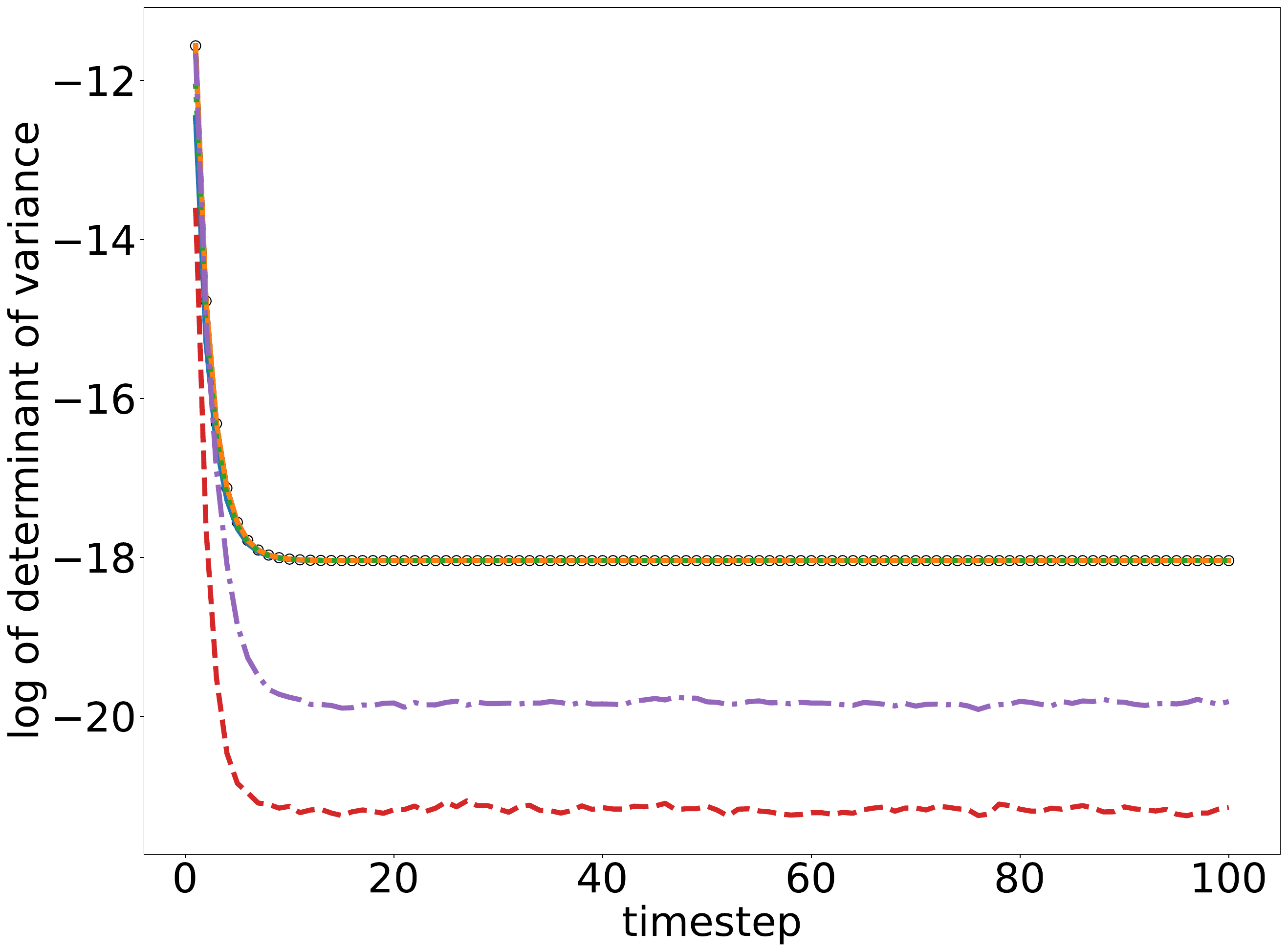}}
\qquad
\subfloat[Estimated $\log(\det(\textrm{variance}))$]{\label{var partob linear}\includegraphics[width=.44\textwidth]{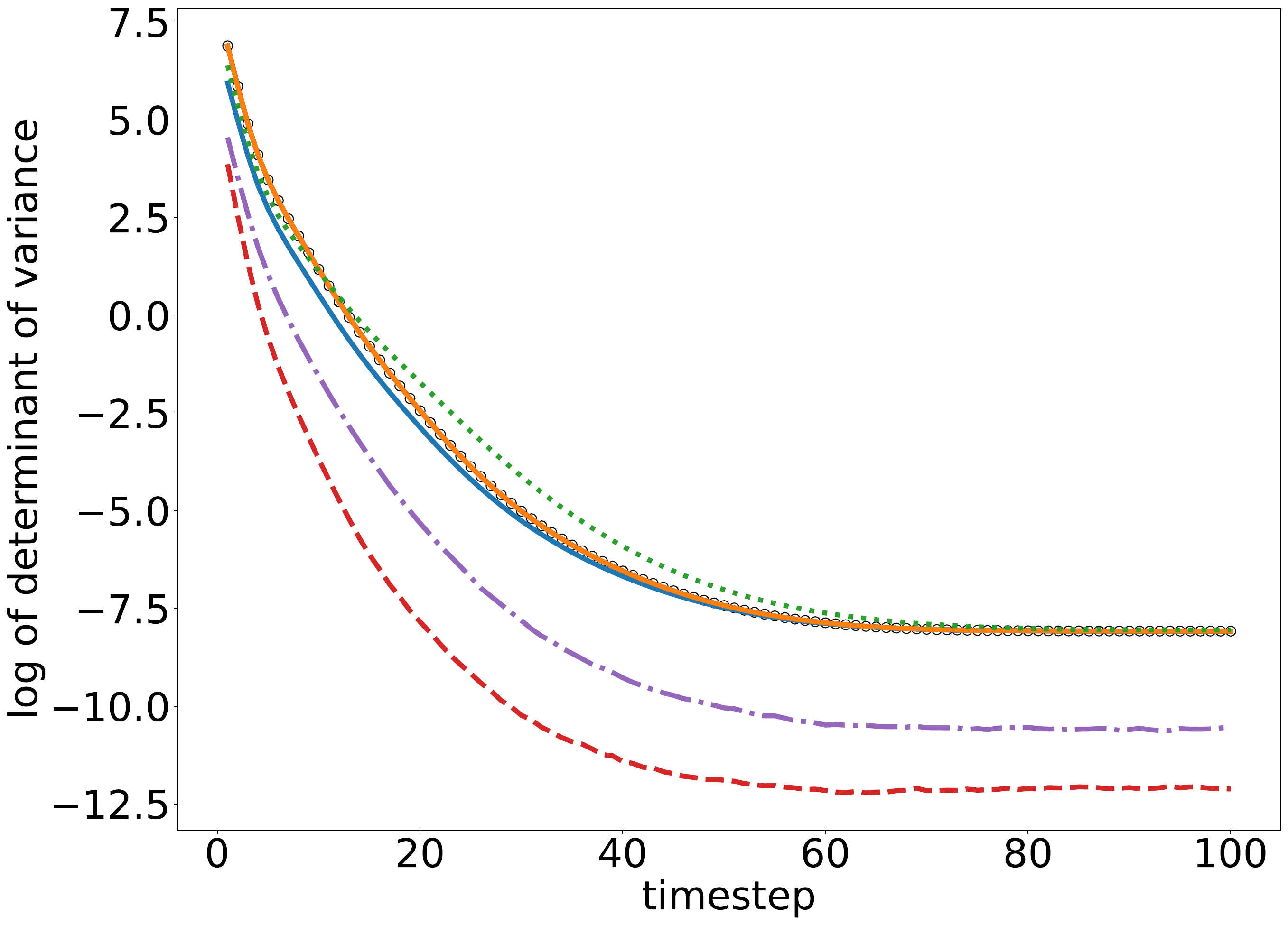}} \\
\subfloat[Mahalanobis distance between the estimate and the true state]{\label{Mdis fullob linear}\includegraphics[width=.44\textwidth]{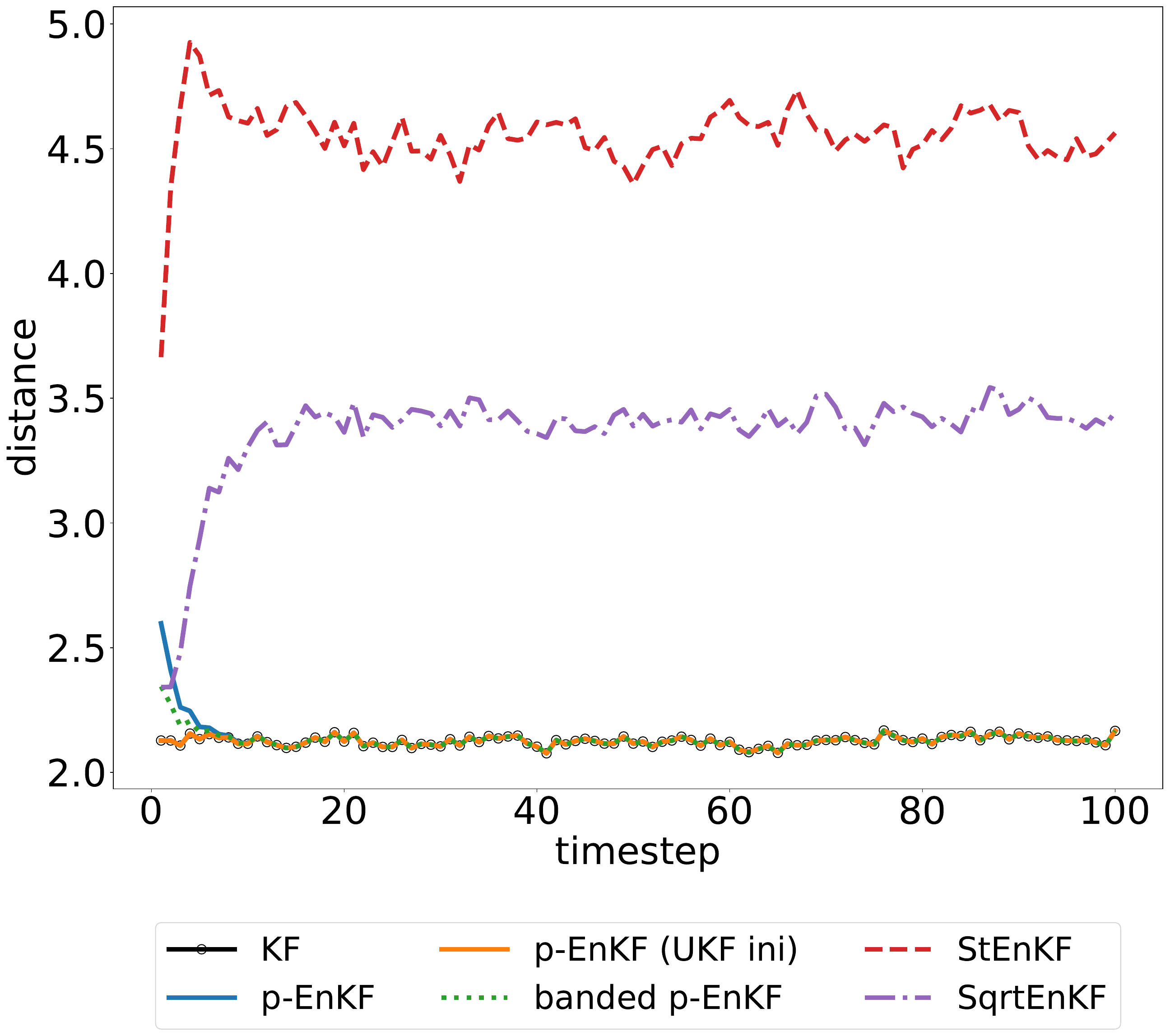}}
\qquad
\subfloat[Mahalanobis distance between the estimate and the true state]{\label{Mdis partob linear}\includegraphics[width=.44\textwidth]{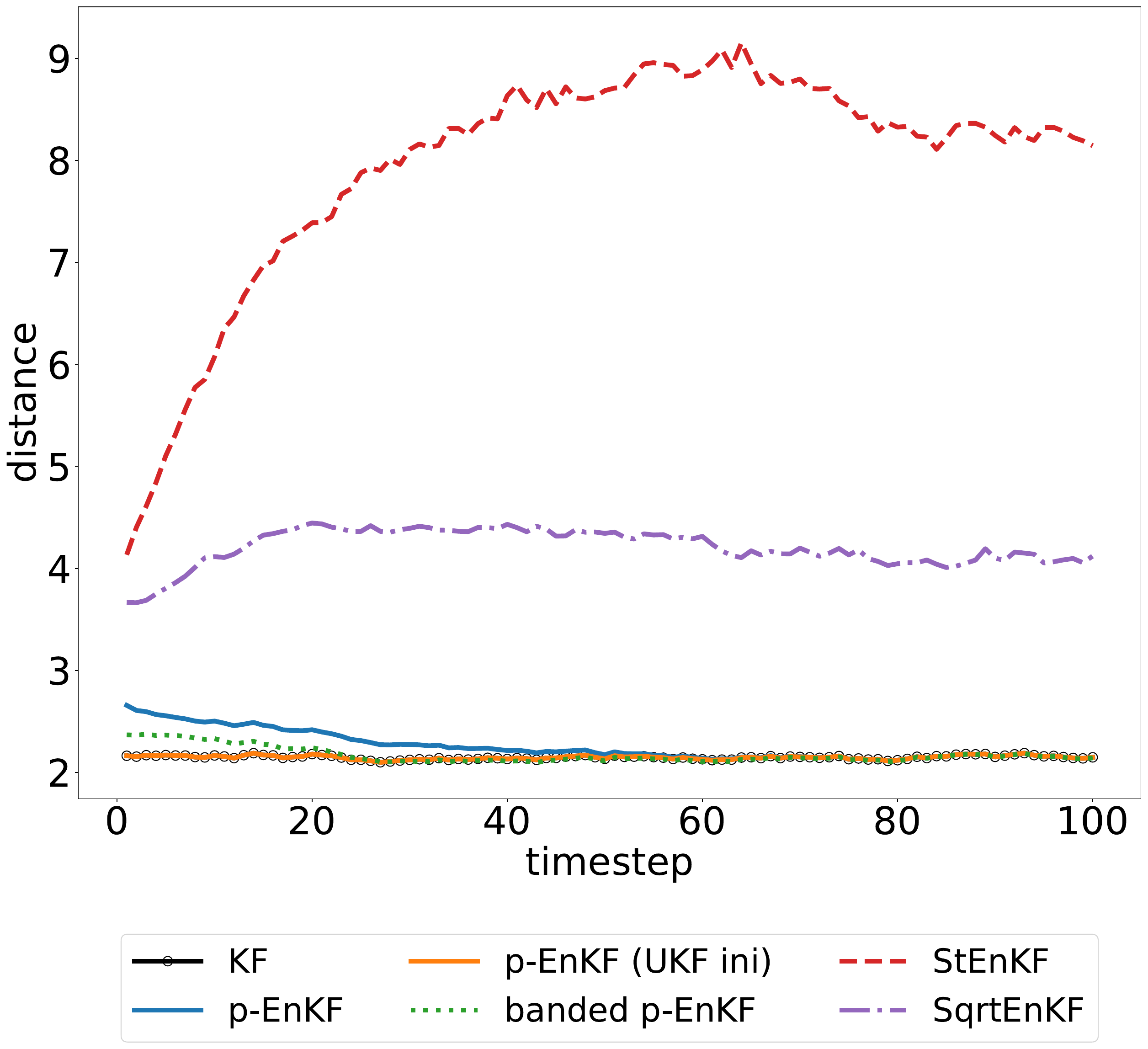}}
\caption{Performance assessment for the linear model with $n=5$ when (left) fully observed, $m=5$ and (right) partially observed, $m=1$, averaged over $1000$ repeats.}
\label{fullbandedlinear}
\end{figure}

\Cref{fullbandedlinear} shows the performance of the p-EnKF with full and banded precision matrix against the same baselines as before. All the algorithms except the KF use 11 samples, i.e., the ensemble size is $N+1$ where $N = 2n$ and the considered performance metrics are averaged over $1000$ repeats. The important aspects in \Cref{fullbandedlinear} are as follows:
\begin{enumerate}
    \item As can be seen in \Cref{var fullob linear,var partob linear}, forcing the precision matrix to be banded has little impact on the precision of the p-EnKF, despite correlations being crucial for a strong performance in the partially observed case (\Cref{var partob linear}). This is confirmed in \Cref{var fullob linear,var partob linear}, where the log-determinant of the posterior variance is mostly unaffected by localisation. A small but noticeable difference can be seen in \Cref{var partob linear}, but the change in variance is in the correct direction: the determinant of the variance was increased by localisation, i.e., some inflation has been automatically applied in order to compensate for the imposed conditional independence. 
    \item The Mahalanobis distance for the two versions of the the p-EnKF is nearly constant and close to the one of the KF for both considered scenario, as seen in \Cref{Mdis fullob linear,Mdis partob linear}. Conversely, the SqrtEnKF  and StEnKF both display large Mahalanobis distances, with the one of the StEnKF even diverging in the partially-observed case; this is due to these algorithms having a larger RMSE than the KF (\Cref{var fullob linear,var partob linear}) but a smaller variance (\Cref{var fullob linear,var partob linear}).
\end{enumerate}

In conclusion, the p-EnKF adopts a nearly optimal behaviour in this linear scenario despite partial observability and localisation. In contrast, the standard versions of the EnKF depart significantly from the behaviour of the KF and tend to be overly optimistic even in the considered small-dimensional inference problems, with well known adverse consequences for downstream tasks for which a reliable quantification of the uncertainty can be crucial.

\subsection{Modified Lorenz 96 type model} \label{Nonlinear model}

In order to show that the strong performance of the p-EnKF observed in the previous section generalises beyond the linear case, we now consider a modified Lorenz 96 (LR96) model, which can be written as a state-space model \eqref{Gaussian State Space equations}, with $k \in \{1,\dots,100\}$, with the following components:
\begin{enumerate}
\item The initial state $X_0$ is sampled from $\mathrm{N}(0_n, 10I_n)$.
\item The deterministic part of the dynamic model is characterised by $x = F_k(x')$ with
\begin{align*}
x_1 & = x'_1 + \left( (x'_2 - c)c - x'_1 + F \right) \Delta t \\
x_2 & = x'_2 + \left( (x'_3 - c)x'_1 - x'_2 + F \right) \Delta t \\
x_i & = x'_i + \left( (x'_{i+1} - x'_{i-2})x'_{i-1} - x'_i + F \right) \Delta t & \textrm{for }  3 & \leq i \leq n-1 \\
x_n & = x'_{n} + \left( (c - x'_{n-2})x'_{n-1} - x'_{n} + F \right) \Delta t 
\end{align*}
where ``$x_i$'' refers to the $i$-th component of $x$ and where $F = 8$, $c = 1$, and $\Delta t = 0.01$.
\item The covariance matrix of the dynamical noise $\epsilon_k$ is $U_k = 0.01I_n$.
\item The observation model is linear, $H_k(X_k) = H_k X_k$, with $H_k = \begin{bmatrix}
I_m  & 0_{m \times (n-m)} 
\end{bmatrix}$ and the covariance matrix of the observation noise $\varepsilon_k$ is $V_k = 0.1I_m$.
\end{enumerate}
Overall, the only difference with the linear model is the transition function $F_k$. Throughout this section, the ensemble size will be set to $2n+1$, that is $N = 2n$, unless specified otherwise. The LR96 model is particularly convenient for performance assessment since the dimension can be easily adjusted.

As before, we start by investigating the performance of the p-EnKF with the full precision matrix based on different sample sizes and dimensions such that all elements of the state are observed $(n=m)$ compared to the SqrtEnKF and the UKF. However, for the nonlinear model, we only examine the performance in terms of RMSE w.r.t.\ the true state at the last step since we can no longer obtain the optimal state estimate from the KF. The RMSE displayed in \Cref{nonlinear dimsam} is averaged over $100$ repeats, except for $n=64$ where we only consider $50$ realisations due to the computational cost. The results shown in \Cref{nonlinear dimsam} are very close to the ones obtained in the linear case, with all three algorithms maintaining a similar level of performance despite the non-linearities.

\begin{figure}[htpb]
    \centering
    \includegraphics[width=\textwidth]{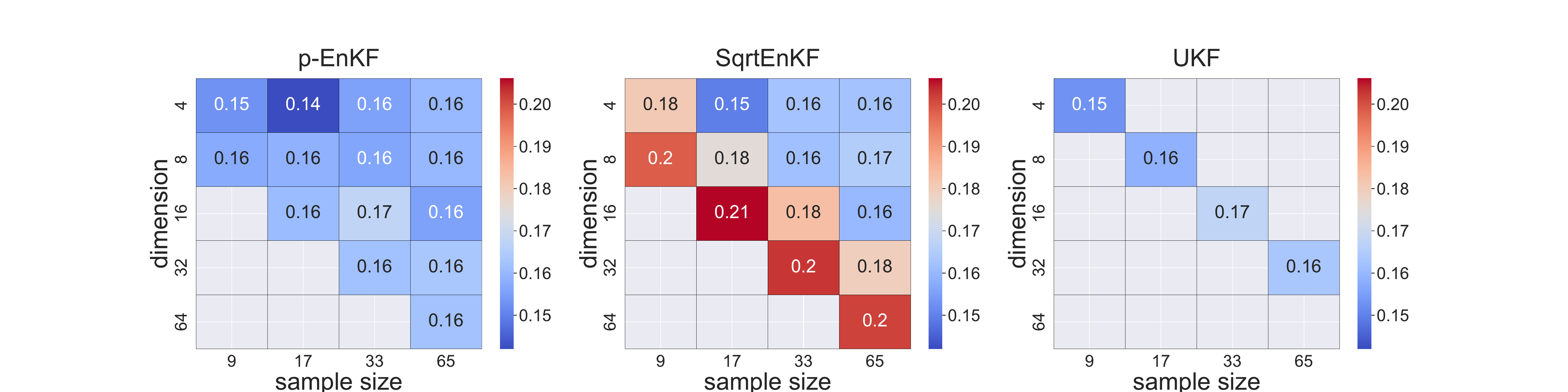}
\caption{Performance of p-EnKF in terms of RMSE w.r.t.\ the true state, compared with the SqrtEnKF and the UKF for the LR96 model}
\label{nonlinear dimsam}
\end{figure}

Because of the similarities between the results with the linear model and the LR96 model, we only highlight where noticeable differences arise. \Cref{error partob nonlinear} shows that, in the partially-observed case with $m=1$ and $n=5$, the SqrtEnKF has a RMSE that is now closer to the StEnKF than to the p-EnKF. Localisation in the p-EnKF still has a very mild effect, although the error increases around time step $70$. This increase in error is however captured by the associated covariance matrix, so that the Mahalanobis distance remains constant throughout the scenario, as required. This behaviour suggests that the correlation increased around time step 70, forcing the p-EnKF with banded matrix to increase the amount of inflation to compensate for the loss of information caused by the imposed conditional independence.

\begin{figure}[htpb]
\centering
\subfloat[RMSE w.r.t.\ the true state]{\label{error partob nonlinear}\includegraphics[width=.32\textwidth,valign=t]{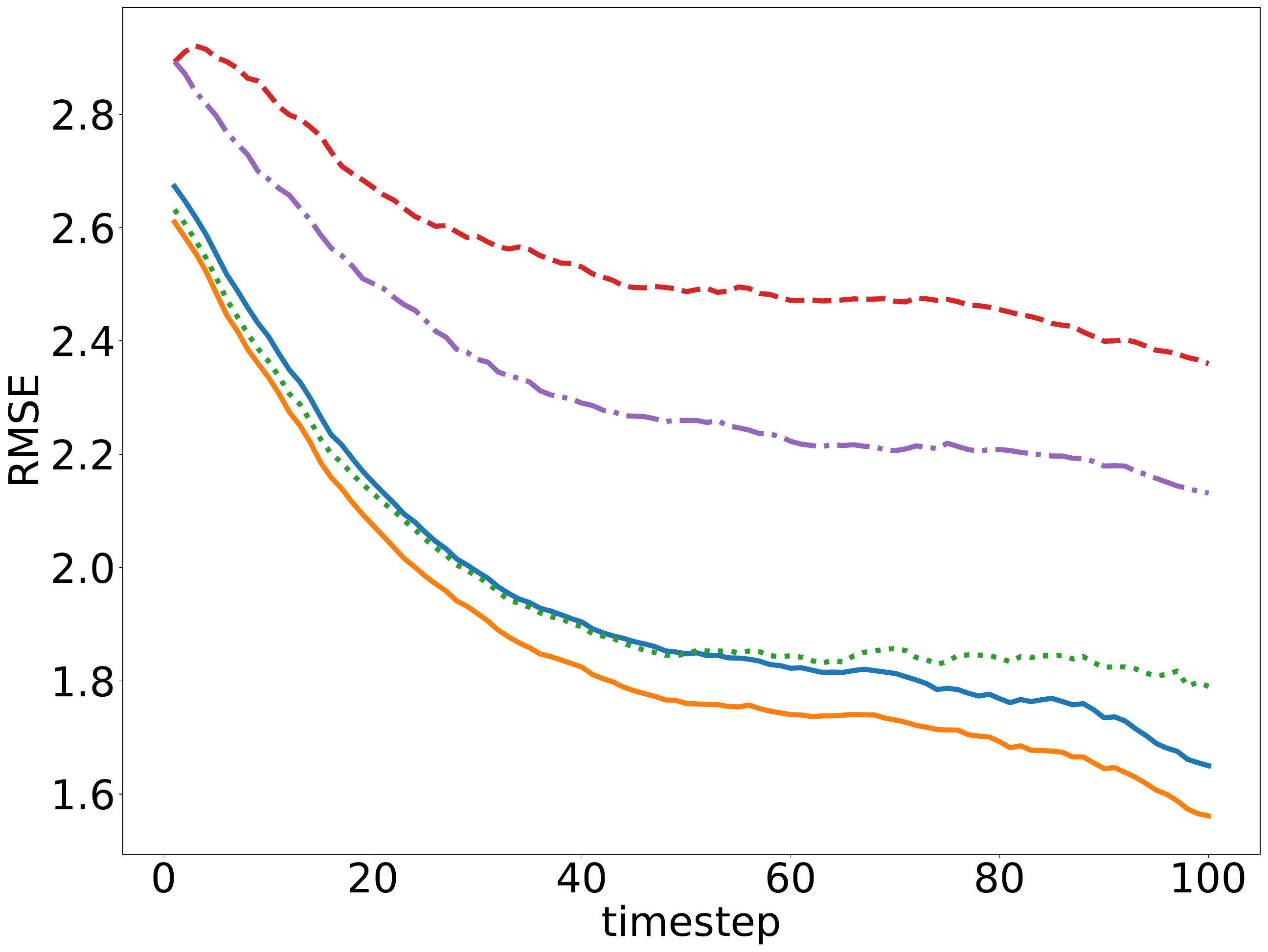}\vphantom{\includegraphics[width=0.32\textwidth,valign=t]{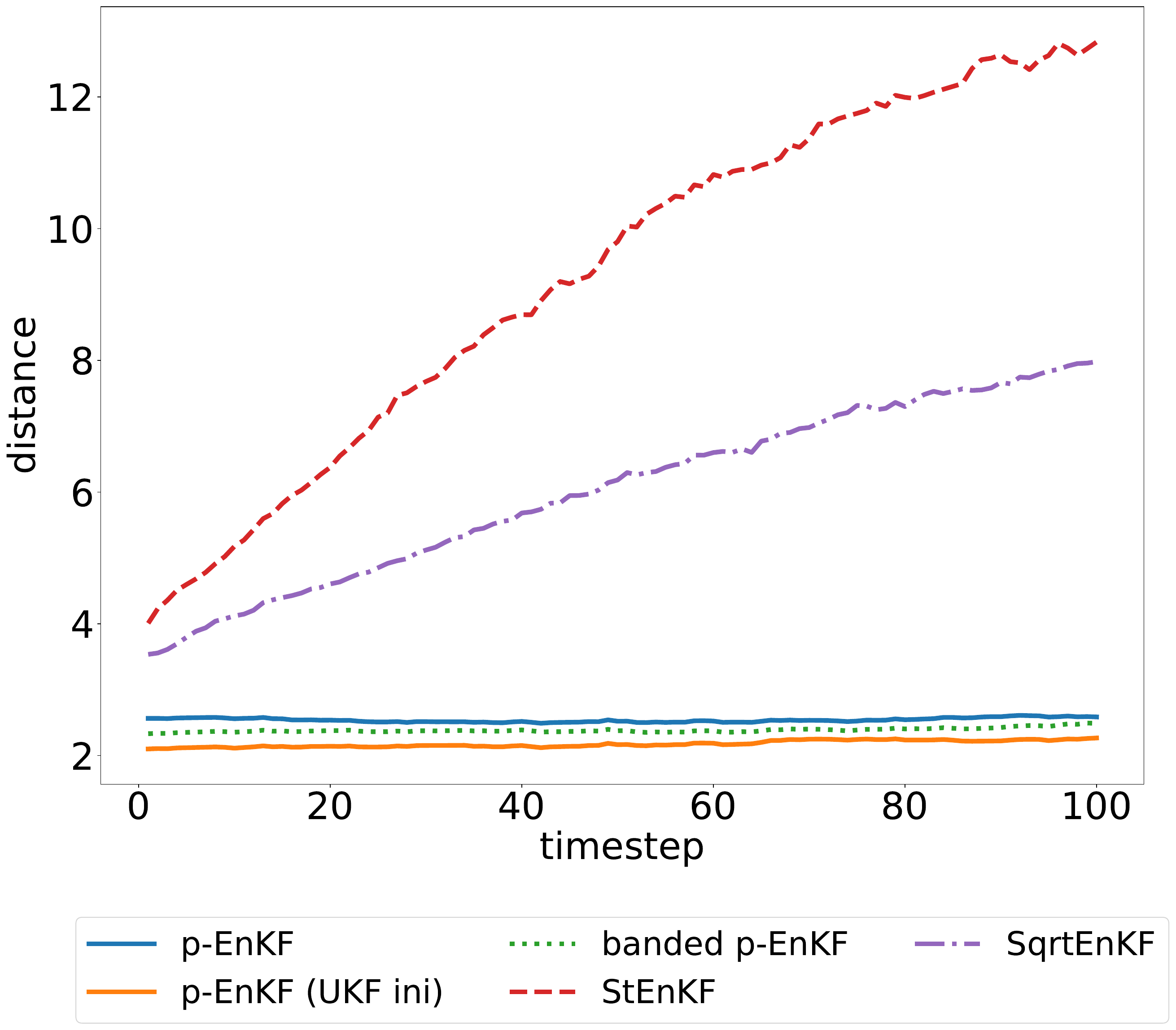}}}
\hfill
\subfloat[Estimated $\log(\det(\textrm{variance}))$]{\label{var partob nonlinear}\includegraphics[width=.32\textwidth,valign=t]{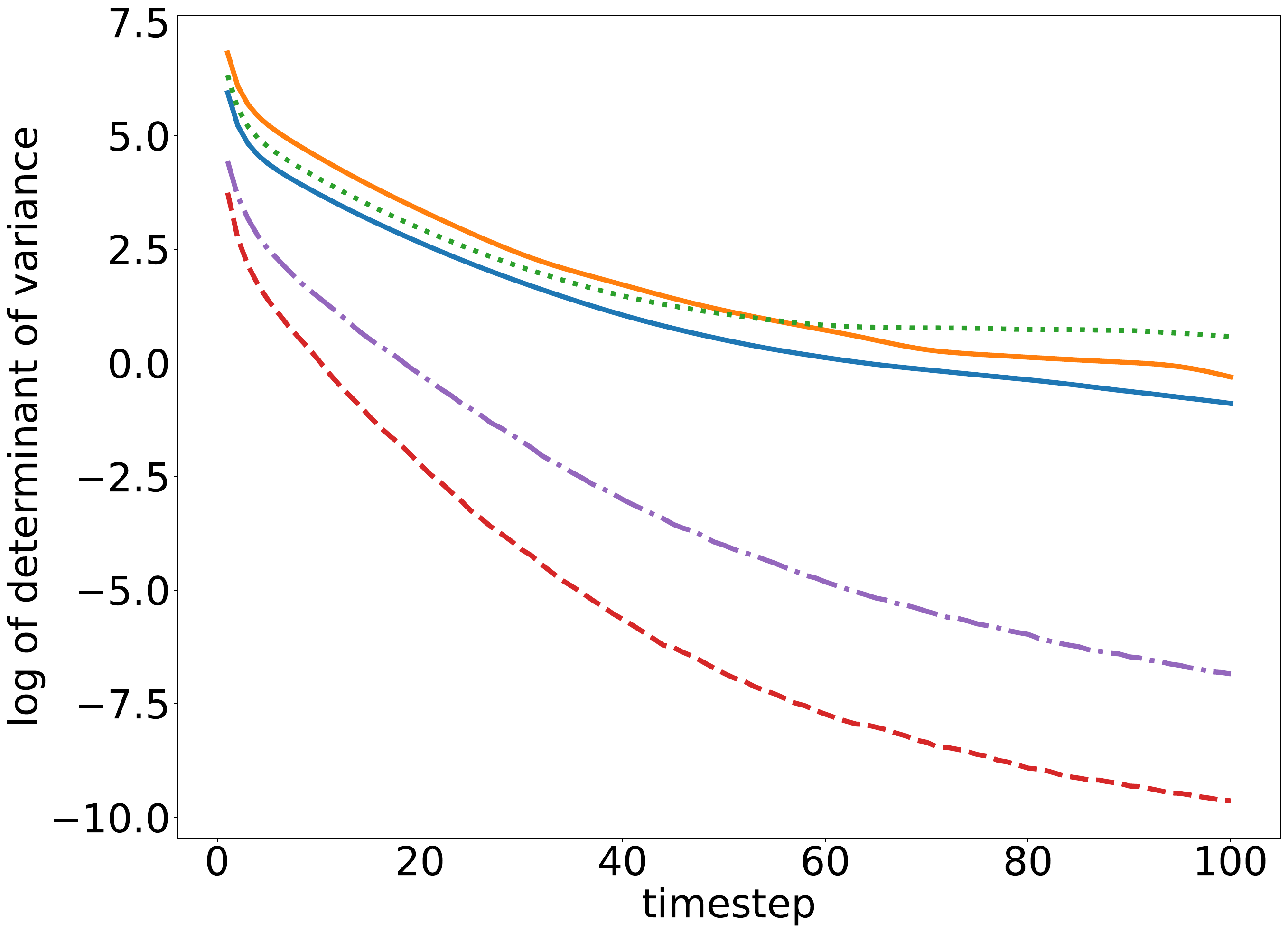}\vphantom{\includegraphics[width=0.32\textwidth,valign=t]{Figures/Mdis_partob_nonlinear.pdf}}} 
\hfill
\subfloat[Mahalanobis distance between the estimate and the true state]{\label{Mdis partob nonlinear}\includegraphics[width=0.32\textwidth,valign=t]{Figures/Mdis_partob_nonlinear.pdf}}
\caption{Performance for a partially-observed LR96 model ($m=1$) when $n = 5$, averaged over $1000$ repeats.}
\label{fullbandednonlinear}
\end{figure}

\section{Conclusion}
\label{sec: Conclusions}

We have introduced the possibilistic ensemble Kalman filter, or p-EnKF, a data assimilation technique treating the state of a state-space model as a fixed quantity about which limited information is available. By using possibility theory to model this form of epistemic uncertainty, we found that much of the intuition behind the standard versions of the EnKF remains valid, with the differences between the theories of possibility and probability leading to key features in the p-EnKF. Specifically, the properties of the expected value and variance in possibility theory appeared to be beneficial for inference problems of small to moderate dimensions, with the p-EnKF closely approximating the Kalman filter in the linear-Gaussian case. These properties also allowed for localisation to be seamlessly applied with no parameter tuning required to compensate for the loss of information incurred by the imposed conditional independence.

In the current version of the p-EnKF, the computation of covariance matrices relies on solving a constrained optimisation problem, which is time-consuming and requires an ensemble size greater than the dimension of the state. Although beyond the scope of this work, lifting these constraints appears to be feasible through the use of specialised optimisation techniques and suitable regularisation. 

There are also several possible avenues for further investigation. We highlight two directions that we think are immediately interesting: although we have found the algorithms performance to be robust to the specification of the initial ensemble, there is a potential for developing systematic approaches to specifying it which might lead to further improved performance with small ensembles; and, an open question is how can the update step be reformulated to operate directly in terms of the precision matrix to avoid the need for matrix inversion, which would be required to facilitate the use of the p-EnKF in high-dimension.

\ack

\bibliography{references}

\end{document}